\newtheorem{theorem}{Theorem}[section]
\newtheorem{assumption}[theorem]{Assumption}
\newtheorem{corollary}[theorem]{Corollary}
\newtheorem{definition}[theorem]{Definition}
\newtheorem{lemma}[theorem]{Lemma}
\newtheorem{proposition}[theorem]{Proposition}
\newtheorem{remark}[theorem]{Remark}
\newcommand \R{{\mathbb R}}
\newcommand \esp{{\mathbb E}}
\newcommand \Prob{{\mathbb P}}
\newcommand \maj{>}
\newcommand \mino{<}
\newcommand \til{~}
\newcommand{\be}{\begin{equation}}
\newcommand{\ee}{\end{equation}}
\newcommand{\sigDup}{\sigma_{\mathrm{Dup}}}
\newcommand{\sigBS}{\sigma_{\mathrm{BS}}}
\newcommand{\overg}{\overline g}
\newcommand \dd{{\rm d}}
\title{On the harmonic mean representation of the implied volatility}
\author{Stefano De Marco\footnote{Centre de Math\'ematiques Appliqu\'ees (CMAP), CNRS, Ecole Polytechnique, Institut Polytechnique de Paris, France. \url{stefano.de-marco@polytechnique.edu}
This research has benefited from the financial support of the \emph{Chaire Risques Financiers} (Fondation du Risque)
and the  \emph{Chaire Stress Test, RISK Management and Financial Steering} (Fondation de l'Ecole Polytechnique).
I would like to thank Claude Martini for stimulating discussions.
}}
\date{\today}
\begin{document}

\maketitle

\begin{abstract}
\noindent 
It is well know that, in the short maturity limit, the implied volatility approaches the integral harmonic mean of the local volatility with respect to log-strike, see [Berestycki et al., Asymptotics and calibration of local volatility models, Quantitative Finance, 2, 2002].
This paper is dedicated to a complementary model-free result:\til an arbitrage-free implied volatility in fact is the harmonic mean of a positive function for \emph{any} fixed maturity.
We investigate the latter function, which is tightly linked to Fukasawa's invertible map $f_{1/2}$ [Fukasawa, The normalizing transformation of the implied volatility smile, Mathematical Finance, 22, 2012], and its relation with the local volatility surface.
It turns out that  the log-strike transformation $z = f_{1/2}(k)$ defines a new coordinate system in which the short-dated implied volatility approaches the \emph{arithmetic} (as opposed to harmonic) mean of the local volatility.
As an illustration, we consider the case of the SSVI parameterization: in this setting, we obtain an explicit formula for the volatility swap from options on realized variance.
\end{abstract}

\section{Introduction}

A classical result in the literature on the implied volatility surface, usually referred to as Berestycki, Busca and Florent's (BBF) formula \cite{BBF}, states that the implied volatility $\sigBS$ generated by a local volatility model
\[
\dd S_t = (r - q) S_t \dd t + \sigma_{\mathrm{loc}}(t, S_t) S_t \, \dd W_t
\]
converges to the harmonic mean of the local volatility in the small maturity limit:
\be \label{e:BBF_formula}
\lim_{T \to 0} \sigBS(T,k)
=
\frac 1{\frac 1k \int_0^k \frac {\dd y}{\sigma_{\mathrm{loc}}(0, S_0 e^y)}}
\qquad \quad \forall \, k,
\ee
where $k = \log \frac K{F^T} \to \log \frac K{S_0}$ denotes log-forward moneyness.
For the asymptotic result \eqref{e:BBF_formula} to hold, the local volatility surface needs to have a well-behaved limit $\sigma_{\mathrm{loc}}(0, \cdot)$ when time tends to zero; see Berestycki et al.\til\cite[Assumption (7)]{BBF} for the precise conditions.
The limit \eqref{e:BBF_formula} can also be obtained using small-time large deviations theory; in this setting, the function $\bigl(\int_0^k \frac {\dd y}{\sigma_{\mathrm{loc}}(0, S_0 e^y)}\bigr)^2$ stems from the finite-dimensional rate function of the process $\log \frac{S_t}{S_0}$.

In this work, we show that a representation analogous to \eqref{e:BBF_formula} actually holds for \emph{every}  fixed maturity and in a model-free setting, that is: as soon as $\sigBS$ is a arbitrage-free implied volatility surface, 
\be \label{e:harmonic_mean_intro}
\sigBS(T,k) 
= \frac 1{\frac 1k \int_0^k \frac {\dd y}{\Sigma(T, y)}}
\qquad \quad \forall \, k, \ \forall \, T,
\ee
for some positive function $\Sigma(T,k)$ (which cannot be interpreted as the local volatility anymore).
We inspect the representation \eqref{e:harmonic_mean_intro} from the point of view of static no-arbitrage conditions, and	 investigate the link of the function $\Sigma$ with the local volatility surface associated to $\sigBS$ via Dupire's formula: as expected, the two functions can be identified in the small time limit (but are different otherwise).

The key element in the (simple) proof of \eqref{e:harmonic_mean_intro} is Fukasawa's seminal work \cite{Fukasawa2012} on the strict monotonicity of the time-dependent Black-Scholes maps
$k \mapsto d_0(k) = - \frac k{\sqrt T \sigBS(T,k)} - \frac{\sqrt T \sigBS(T,k)}2$ and 
$k \mapsto d_1(k) = - \frac k{\sqrt T \sigBS(T,k)} + \frac{\sqrt T \sigBS(T,k)}2$.
Since the function $\Sigma$ is linked to the interpolated map $f_{1/2}(k) = -\frac12 (d_0(k) + d_1(k))$, we explore the consequences of the
log-strike transformation $z = f_{1/2}(k)$ (referred to as ``normalizing transformation'' in \cite{Fukasawa2012}) on the geometry of the implied volatility surface.
It turns out that such a change of
variable transforms the harmonic mean representation  \eqref{e:harmonic_mean_intro} into an \emph{arithmetic} mean representation
\[
\sigma_{1/2}(T, z)
= \frac 1 z \int_0^z \Sigma_{1/2}(T, y) \dd y
\qquad \quad \forall \, z, \ \forall \, T,
\]
where $\sigma_{1/2}$ and $\Sigma_{1/2}$ are, respectively, the implied volatility and its harmonic mean counterpart in the new coordinate system, that is $\sigma_{1/2}(T,f_{1/2}(k)) = \sigBS(T,k)$ and $\Sigma_{1/2}(T,f_{1/2}(k)) = \Sigma(T,k)$, for all $k$.

As an application we show that, under some reasonable conditions on the volatility surface $\sigBS$ one starts from,  the short-time limit of the function
$\Sigma_{1/2}$ can also be identified with the short-time limit of Dupire's local volatility, 
showing that BBF asymptotic formula \eqref{e:BBF_formula} is replaced by an arithmetic mean formula in the new coordinate system.
We refer to Theorem \ref{t:arithm_mean_formula} for precise statements.
\medskip

\textbf{Notation and basic definitions}.
We denote $c_{\mathrm{BS}}: (k,v) \in \R \times [0,\infty) \to \R_+ := [0,\infty)$ the normalized Black-Scholes call price with forward log-moneyness $k$ and total implied volatility parameter $v = \sqrt \tau \sigma$ :
\be \label{e:normalized_BS_price}
c_{\mathrm{BS}}(k, v) =
\left \{
\begin{array}{ll}
N \left(d_1(k,v) \right) - e^k N \left(d_0(k,v) \right)
& \mbox{if } v \maj 0
\\
(1 - e^k)^+
& \mbox{if } v = 0
\end{array}
\right.
\ee
where $N$ is the standard Gaussian cdf $N(x) = \int_{-\infty}^x \phi(y) \dd y$, $\phi(y) = \frac 1{\sqrt{2 \pi}} e^{-\frac{y^2}2}$, and
$d_i(k,v) = -\frac{k}{v} - (1 -2 i) \frac v 2$,
$i \in \{0,1\}$,
for every $v \maj 0$.

\section{The harmonic mean representation of the implied volatility} \label{s:harmonic_repr}

Let us recall the following

\begin{definition}[Arbitrage-free implied volatility, fixed maturity] \label{def:arb_free_impl_vol}
Let time to maturity $T$ be fixed. 
We say that a function $v: \R \to \R_+$ is a total implied volatility free of static arbitrage if the function 
\be \label{e:call_price}
K \in (0,\infty) \mapsto C(K) := c_{\mathrm{BS}} \biggl(\log \frac K F, v\Bigl(\log \frac K F \Bigr) \biggr)
\ee
is convex and satisfies $\lim_{K \to \infty} C(K) = 0$ (for some, hence for any, $F \maj 0$).
\\
If $v$ is an arbitrage-free total implied volatility for time to maturity $T$, we denote $\sigBS(k) = \frac 1 {\sqrt T} \, v(k)$ the related implied volatility \emph{tout court}.
\end{definition}

It is well know 
that the conditions on the function $C$ in Definition \ref{def:arb_free_impl_vol} (which are usually referred to as no--butterfly arbitrage conditions) are equivalent to the existence of a pricing measure: if $v$ satisfies Definition \ref{def:arb_free_impl_vol}, there exists a non-negative random variable $X$ with $\esp[X] = 1$ such that
\be \label{e:def_total_impl_vol}
c_{\mathrm{BS}}(k, v(k)) = \esp\left[ (X - e^k)^+ \right],
\qquad
\forall \, k \in \R.	
\ee
Arbitrage-free implied volatilities can  fail to be everywhere differentiable and can vanish on some interval; the regularity and the support of $v$ can of course be linked with the regularity and the support of the law of the random variable $X$ in \eqref{e:def_total_impl_vol}, see  \cite[Lemma 5.2]{RogTeh}.
We restrict our analysis to total implied volatilities that are strictly positive and differentiable:

\begin{assumption} \label{a:assumption_v}
We assume $v \in C^1(\R)$ and $v(k) \maj 0$, for every $k \in \R$.
\end{assumption}

Denote 
\be \label{e:f12}
\begin{aligned}
&
f_0(k) := -d_0(k,v(k)) = \frac{k}{v(k)} + \frac{v(k)} 2,
\\
&f_1(k) := -d_1(k,v(k))  = \frac{k}{v(k)} - \frac {v(k)} 2.
\end{aligned}
\ee
Fukasawa \cite{Fukasawa2012} proved the following result.

\begin{theorem}[Fukasawa \cite{Fukasawa2012}] \label{t:monotonicityf12}
If $v$ is an arbitrage-free total implied volatility satisfying Assumption \ref{a:assumption_v}, then
\[
\frac d{dk} f_i(k) \maj 0
\qquad \forall \, k \in \R,
\quad i \in \{0,1\}.
\]
\end{theorem}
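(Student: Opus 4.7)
The plan is to reduce the claim $f_i'(k) > 0$ to the pointwise inequalities $v'(k) f_i(k) < 1$ for $i \in \{0,1\}$, and to prove those inequalities from arbitrage-freeness by means of Mills' inequality for the standard normal. Differentiating $f_0$ and $f_1$ directly from \eqref{e:f12} and clearing the factor $v(k)$ gives the compact identities
\[
v(k)\, f_0'(k) = 1 - v'(k)\, f_1(k), \qquad v(k)\, f_1'(k) = 1 - v'(k)\, f_0(k).
\]
Since $v(k) > 0$ by Assumption \ref{a:assumption_v}, the theorem is equivalent to $v'(k)\, f_i(k) < 1$ for every $k \in \R$ and $i \in \{0,1\}$.

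To extract two-sided bounds on $v'(k)$ I differentiate \eqref{e:def_total_impl_vol} in $k$. On the left, $\frac{\dd}{\dd k}\esp[(X - e^k)^+] = -e^k\,\Prob(X > e^k)$; on the right, the standard Black--Scholes identities $\partial_k c_{\mathrm{BS}}(k,v) = -e^k N(d_0)$ and $\partial_v c_{\mathrm{BS}}(k,v) = \phi(d_1) = e^k \phi(d_0)$, together with $d_0 = -f_0$, yield
\[
\Prob(X \leq e^k) = N(f_0(k)) + \phi(f_0(k))\, v'(k).
\]
An analogous computation carried out under the share measure $\dd Q/\dd \Prob := X$ (a true probability since $X \ge 0$ and $\esp[X] = 1$) produces the companion identity $Q(X \leq e^k) = N(f_1(k)) + \phi(f_1(k))\, v'(k)$. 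Requiring both left-hand sides to belong to $[0,1]$ then gives, for $i \in \{0,1\}$, the two-sided bound
\[
-\frac{N(f_i(k))}{\phi(f_i(k))} \leq v'(k) \leq \frac{N(-f_i(k))}{\phi(f_i(k))}.
\]

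To close, I invoke Mills' inequality $N(-x) < \phi(x)/x$ for $x > 0$ through a sign case split. If $f_i(k) > 0$, the upper bound combined with Mills gives $v'(k) < 1/f_i(k)$, hence $v'(k) f_i(k) < 1$ when $v'(k) > 0$ (the case $v'(k) \leq 0$ being trivial). If $f_i(k) < 0$, rewriting the lower bound via $N(f_i(k)) = N(-|f_i(k)|)$ and applying Mills to $|f_i(k)|$ gives $-v'(k) < 1/|f_i(k)|$, hence $v'(k) f_i(k) < 1$ when $v'(k) < 0$ (again the opposite sign is trivial); the case $f_i(k) = 0$ is immediate. I expect the main subtlety to be not the case analysis itself but the conceptual observation that the constraint $\Prob(X \leq e^k) \in [0,1]$ controls $v'(k) f_0(k)$ only: the share-measure identity is indispensable to control $v'(k) f_1(k)$, and hence to obtain the monotonicity of $f_0$.
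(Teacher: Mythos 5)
The paper does not reproduce a proof of this theorem (it cites Fukasawa \cite{Fukasawa2012}), so I assess your argument on its own merits and against the standard literature. Your proof is correct and is, in substance, the classical argument (Fukasawa's, and the closely related bound in Rogers--Tehranchi \cite{RogTeh}): the algebraic reduction $v f_0' = 1 - v' f_1$, $v f_1' = 1 - v' f_0$ is right; differentiating the pricing identity \eqref{e:def_total_impl_vol} in $k$ yields $\Prob(X \le e^k) = N(f_0) + \phi(f_0)\,v'$, and the share-measure analogue $Q(X \le e^k) = N(f_1) + \phi(f_1)\,v'$ is most cleanly obtained from $Q(X > K) = C(K) - K\,C'(K)$; the two CDF constraints then give the two-sided bounds on $v'$, and Mills' inequality (strict) closes the case analysis to yield $v' f_i < 1$ strictly. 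Two minor remarks: (i) the differentiation of $\esp[(X-e^k)^+]$ should be justified by noting that the right derivative in strike of a convex call-price function is $-\Prob(X>K)$ and, since the Black--Scholes side is $C^1$ under Assumption \ref{a:assumption_v}, the two-sided derivative exists and equals it; (ii) the split on the sign of $v'(k)$ is unnecessary — for $f_i(k)>0$ the upper bound alone gives $v' < 1/f_i$ hence $v' f_i < 1$, and for $f_i(k)<0$ the lower bound alone gives $v' > 1/f_i$ hence $v' f_i < 1$ after multiplying by $f_i<0$. Your closing observation about the cross-coupling (the $\Prob$-identity controls $f_1'$ and the $Q$-identity controls $f_0'$) is accurate and worth keeping.
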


The strict monotonicity of the maps $f_0$ and $f_1$ can be exploited to rigorously justify some remarkable model-free pricing formulas for European claims such as the log-contract, see \cite{ChrissMorok99, Fukasawa2012, DM_CM_MGFs} and section \ref{s:pricing_formula} below, and can also be used as a partial characterization of the static no-arbitrage condition on $v$, see Remark \ref{r:no_arb_condition} below and the work carried out in \cite{NoArbSVI}.

The current section is devoted to the following result:
 
\begin{theorem} \label{t:harmonic_repr}
Let $v$ be an arbitrage-free total implied volatility satisfying Assumption \ref{a:assumption_v}.
Then, there exists a unique strictly positive function $h \in C^0(\R)$ such that $v$ is the harmonic mean of $h$:
\be \label{e:harmonic_mean}
v(k) = \frac 1{\frac 1k \int_0^k \frac 1{h(y)} \dd y}
\qquad \forall \, k \neq 0,
\ee
and $v(0) = h(0)$.
\end{theorem}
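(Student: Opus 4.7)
The plan is to reduce the representation \eqref{e:harmonic_mean} to a differentiation/integration argument around Fukasawa's monotone map, since the harmonic mean formula is equivalent to asking that $k/v(k)$ be the antiderivative of $1/h$ starting from $0$.

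First I would introduce the interpolated map $f_{1/2}(k) := \frac12\bigl(f_0(k) + f_1(k)\bigr)$. Using the explicit expressions \eqref{e:f12}, the $\pm v(k)/2$ contributions cancel and one gets
\[
f_{1/2}(k) = \frac{k}{v(k)}.
\]
Under Assumption \ref{a:assumption_v}, both $f_0$ and $f_1$ are $C^1(\R)$, hence so is $f_{1/2}$. Moreover, by Fukasawa's Theorem \ref{t:monotonicityf12}, $f_0'(k)>0$ and $f_1'(k)>0$ for every $k$, so $f_{1/2}'(k)>0$ as well. In particular $f_{1/2}$ is a $C^1$ diffeomorphism of $\R$ onto its image, and its derivative is continuous and strictly positive.

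Then I would define
\[
h(k) := \frac{1}{f_{1/2}'(k)}, \qquad k \in \R,
\]
which is automatically continuous and strictly positive on $\R$. Since $f_{1/2}(0)=0/v(0)=0$, the fundamental theorem of calculus gives
\[
\int_0^k \frac{\dd y}{h(y)} \;=\; \int_0^k f_{1/2}'(y)\,\dd y \;=\; f_{1/2}(k) \;=\; \frac{k}{v(k)},
\]
for every $k \in \R$, which is precisely the harmonic mean identity \eqref{e:harmonic_mean} after dividing by $k \neq 0$ and inverting. For the value at $k=0$, one applies L'Hôpital (or equivalently uses continuity of $1/h$ at the origin) to deduce
\[
\lim_{k \to 0} \frac{1}{k}\int_0^k \frac{\dd y}{h(y)} = \frac{1}{h(0)},
\]
which combined with the identity $\frac{k}{v(k)} = \int_0^k \dd y/h(y)$ and the continuity of $v$ yields $v(0) = h(0)$; alternatively, a direct computation of $f_{1/2}'(0) = 1/v(0)$ gives the same conclusion.

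Uniqueness is the easiest part: if $h_1,h_2 \in C^0(\R)$ are both strictly positive and satisfy \eqref{e:harmonic_mean}, then the functions $k \mapsto \int_0^k \dd y/h_i(y)$ coincide on $\R \setminus \{0\}$, hence on $\R$ by continuity; differentiating with respect to $k$ (legitimate because the integrands are continuous) yields $1/h_1 = 1/h_2$ pointwise. I do not anticipate any real obstacle: the only subtle point is handling $k=0$, where both sides of \eqref{e:harmonic_mean} are defined only by continuous extension, but this is harmless once one has identified $h$ with $1/f_{1/2}'$.
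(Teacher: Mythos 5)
Your proposal is correct and follows the paper's argument essentially verbatim: identify $f_{1/2}=\tfrac12(f_0+f_1)=k/v(k)$, invoke Fukasawa's monotonicity to get $f_{1/2}'>0$, set $h=1/f_{1/2}'$, and recover \eqref{e:harmonic_mean} from the fundamental theorem of calculus together with $f_{1/2}(0)=0$. The only cosmetic difference is in the uniqueness step, where you compare two candidate $h_i$ directly and differentiate, while the paper writes down the explicit formula \eqref{e:derivative_H} for $1/h$; these are the same calculation.
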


\begin{proof}
Let $f_{1/2}(k) = \frac k{v(k)}$, so that $f_{1/2} \in C^1(\R)$.
Since, by definition of $f_0$ and $f_1$,
\[
f_{1/2}(k) = \frac 12 \left(f_0(k) + f_1(k) \right),
\]
it follows from Theorem \ref{t:monotonicityf12} that $f_{1/2}'(k) \maj 0$ for every $k \in \R$.
Set $h := \frac 1 {f_{1/2}'}$: the function $h$ is strictly positive, continuous, and such that $\frac k {v(k)} = f_{1/2}(k) = f_{1/2}(0) + \int_0^k \frac 1{h(y)} \dd y = \int_0^k \frac 1{h(y)} \dd y$.
Then by construction, $v$ satisfies equation \eqref{e:harmonic_mean} for every $k \neq0$.
Taking the limit as $k \to 0$ in \eqref{e:harmonic_mean} and using the continuity of $h$,
we obtain that $v$ and $h$ coincide at $k=0$.

The uniqueness of $h$ follows from  \eqref{e:harmonic_mean}: taking derivatives on both sides of  $\int_0^k \frac 1{h(y)} \dd y = \frac k{v(k)}$, we get
\be \label{e:derivative_H}
\frac 1{h(k)} =
\frac{\dd}{\dd k } \frac k{v(k)}
= \frac 1{v(k)}\left(1 - k \frac{v'(k)}{v(k)} \right)
\qquad \forall k \in \R.
\ee
which identifies uniquely the function $h$.
\end{proof}

In view of Theorem \ref{t:harmonic_repr}, it seems reasonable to wonder what is the class of functions having an harmonic mean representation as \eqref{e:harmonic_mean}.
Actually, every strictly positive function $v \in C^1(\R)$ admits the representation \eqref{e:harmonic_mean} for a uniquely determined function $h \in C^0(\R)$: simply define $h$ from \eqref{e:derivative_H}.
The important part of the statement of Thm \ref{t:harmonic_repr} is the \emph{positivity} of $h$: if we start from any function $v$, $h$ will not be positive in general.
In this respect, Theorem \ref{t:harmonic_repr} provides a necessary condition for arbitrage freeness of a total implied volatility $v$, but this condition is unfortunately not sufficient,
as we discuss below.\footnote{If Theorem \ref{t:harmonic_repr} were ``if and only if'', we could generate implied volatilities parameterizations by simply taking harmonic means of positive functions.}

\begin{remark} \label{r:no_arb_condition} 
Consider a strictly positive function $v$: we can always define the maps $f_0$ and $f_1$ as in \eqref{e:f12}.
\begin{enumerate}
\item[i)] If we assume $v \in C^2(\R)$, then a computation involving the derivatives of the Black-Scholes call price \eqref{e:normalized_BS_price} yields 
\be \label{e:convexity_call_price}
\frac{\dd^2}{\dd K^2} \,
c_{\mathrm{BS}}\biggl(\log \frac K {F}, v\Bigl(\log \frac K {F} \Bigr)\biggr)
= 
\frac 1{ F K} \phi(f_0(k)) 
\Bigl( v''(k) + v(k) f_0'(k) f_1' (k)
\Bigr) \Big|_{k = \log \frac K F} 
\ee
(see for example \cite{Fukasawa2012} for the derivation of the expression on right hand side), which shows that the call price function $K \mapsto C(K)$ in \eqref{e:call_price} is convex if and only if
\be \label{e:no_butterfly_arb}
v'' + v \, f_0' \, f_1' \ge 0.
\ee
In particular, we see that the strict monotonicity of $f_0$ and $f_1$ in Theorem \ref{t:monotonicityf12} is a necessary condition for the convexity of $C(\cdot)$, but not a sufficient condition.\footnote{For example, the strict monotonicity of $f_0$ and $f_1$ together with convexity of $v$ would be a sufficient condition. Unfortunately, this condition would be very restrictive in practice, since implied volatility smiles calibrated to market data are often not convex.} 

\item[ii)]  We can further define the function $h$ from equation \eqref{e:derivative_H}, so that $v$ and $h$ satisfy \eqref{e:harmonic_mean} by construction.
Since $\frac 1 h = \frac12 (f_0 + f_1)'$,
$h$ is positive if and only if  the sum $f_0 + f_1$ is a strictly increasing function.
The latter condition is weaker that the strict monotonicity of $f_0$ and $f_1$ separately, which, as seen in the previous bullet point, is itself a necessary but not sufficient condition for no arbitrage.
\end{enumerate}
\end{remark}

\noindent 
We might also wonder whether the no butterfly-arbitrage condition \eqref{e:no_butterfly_arb} simplifies when rephrased in terms of the function $h$.
In other words: is it easier to generate arbitrage-free implied volatilities via \eqref{e:harmonic_mean} by looking for
appropriate functions $h$, than trying to look for arbitrage-free parameterizations of $v$ directly?
The condition \eqref{e:no_butterfly_arb}can be rewritten more explicitly as $v'' - \frac{v} 4 (v')^2 + \frac 1 {v} \left(1 - \frac{k \, v'}{v} \right) \ge 0$, see \cite{GathBook}.
Injecting the expression of $h$ from \eqref{e:derivative_H}, we obtain a new condition involving the three functions $h$, $h'$ and $\int_0^{\cdot} \frac 1h$ (instead of $v$, $v'$ and $v''$).
The resulting expression does not seem particularly insightful to us (concretely: it does not look more tractable than condition \eqref{e:no_butterfly_arb} itself), and is therefore not reported here.

\begin{corollary} \label{cor:h_equal_v}
Under the assumptions and notation of Theorem \ref{t:harmonic_repr},
\begin{enumerate}
\item[(i)] $v$ and $h$ coincide at the critical points of $v$:
\[
h(k) = v(k)
\Longleftrightarrow
k = 0 \mbox{ or } v'(k) = 0.
\]
\item[(ii)] The functions $v$ and $h$ satisfy the ``1/2-skew rule''
\be \label{e:one_half_skew}
v'(0) = \frac 12 h'(0).
\ee
\end{enumerate}
\end{corollary}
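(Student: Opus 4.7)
The plan is to rely exclusively on the pointwise identity
\[
\frac{1}{h(k)} = \frac{1}{v(k)}\left(1 - k\,\frac{v'(k)}{v(k)}\right),
\qquad k \in \R,
\]
derived in equation \eqref{e:derivative_H} during the proof of Theorem \ref{t:harmonic_repr}. Both parts follow quickly once this relation is in hand.

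For part (i), the idea is simply to substitute the equation $h(k) = v(k)$ into the identity above. After multiplying through by $v(k)$, which is legitimate since $v \maj 0$ by Assumption \ref{a:assumption_v}, the equation collapses to $1 = 1 - k v'(k)/v(k)$, i.e.\ $k\, v'(k) = 0$. Strict positivity of $v$ then gives the stated equivalence $k = 0$ or $v'(k) = 0$. The reverse implication is immediate by plugging either $k = 0$ or $v'(k) = 0$ back into the identity.

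For part (ii), I would perform a Taylor expansion at $k = 0$ of the integrated form $\int_0^k \frac{\dd y}{h(y)} = \frac{k}{v(k)}$ that appears in the proof of Theorem \ref{t:harmonic_repr}. Since $h \in C^0$ and (by Assumption \ref{a:assumption_v}) $v \in C^1$ with $v \maj 0$, both sides are $C^1$ in $k$ near the origin, and the second-order expansions read
\[
\int_0^k \frac{\dd y}{h(y)} = \frac{k}{h(0)} - \frac{h'(0)}{2\,h(0)^2}\,k^2 + o(k^2),
\qquad
\frac{k}{v(k)} = \frac{k}{v(0)} - \frac{v'(0)}{v(0)^2}\,k^2 + o(k^2).
\]
Matching the linear coefficients recovers the identity $v(0) = h(0)$ already established in Theorem \ref{t:harmonic_repr}, and matching the quadratic coefficients together with $v(0) = h(0)$ yields $v'(0) = \frac12 h'(0)$. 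Alternatively, one can differentiate the identity \eqref{e:derivative_H} at $k = 0$ directly, which gives the same conclusion without invoking Taylor expansions; both routes are equally short.

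There is no real obstacle: the whole work was done in proving Theorem \ref{t:harmonic_repr}, and the corollary is just a reading of \eqref{e:derivative_H} at distinguished points. The only point to watch is the legitimacy of computing $h'(0)$—strictly speaking, a priori $h$ is only $C^0$—so the statement of (ii) must be understood as being conditional on $v \in C^2$ near $0$ (which ensures $h \in C^1$ near $0$ via \eqref{e:derivative_H}); equivalently, it can be phrased as the expansion-level identity derived above, which only requires $v \in C^1$.
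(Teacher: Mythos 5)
Your proof is correct, and for part (i) it is verbatim the paper's argument (substitute $h(k)=v(k)$ into \eqref{e:derivative_H} and use $v\maj 0$). For part (ii), your Taylor-matching of the integrated identity $\int_0^k h^{-1} = k/v(k)$ is a reformulation of the paper's computation of $\lim_{k\to 0}(v(k)-v(0))/k$ from \eqref{e:harmonic_mean}, so the route is essentially the same.

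Your closing regularity caveat is, however, more cautious than necessary, and in fact slightly misplaced. You claim that $h'(0)$ requires $v\in C^2$ near $0$; it does not. From \eqref{e:derivative_H},
\[
\frac1k\left(\frac1{h(k)}-\frac1{h(0)}\right)
= \frac1k\left(\frac1{v(k)}-\frac1{v(0)}\right) - \frac{v'(k)}{v(k)^2}
\;\xrightarrow[k\to 0]{}\; -\frac{2\,v'(0)}{v(0)^2},
\]
using only $v\in C^1$: the first summand converges because $1/v$ is $C^1$, the second because $v'$ and $v$ are continuous. Hence $(1/h)'(0)$ exists, and since $h$ is continuous and positive, $h'(0)=2v'(0)$ follows directly under Assumption \ref{a:assumption_v} as stated. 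Note that this is exactly the ``differentiate \eqref{e:derivative_H} directly'' alternative you mention in passing; it is that route, not the Taylor expansion, which avoids any hidden use of $v''$, so the two alternatives you offer are not in fact ``equally'' undemanding in regularity. The statement of the corollary needs no strengthening of the hypotheses.
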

\begin{proof}
(i) follows immediately from \eqref{e:derivative_H}.
(ii) simply states that the $1/2$-derivative rule \eqref{e:one_half_skew} always holds at $k=0$ for a function $h$ and its harmonic mean $v$, as it can be checked by direct computation of the limit $\lim_{k \to 0} \frac{v(k) - v(0)} k$ using \eqref{e:harmonic_mean}.
\end{proof}
\medskip

\begin{remark}[An upper bound]
Since the arithmetic mean
exceeds the harmonic mean, we obtain that the upper bound
\[
v(k) \le M(k) := 
\left\{
\begin{array}{ll}
\frac 1 k \int_0^k h(y) \dd y & \forall k \neq 0,
\\
v(0) & k =0.
\end{array}
\right.
\]
holds for any arbitrage-free implied volatility $v$.
\end{remark}

In Figure \ref{fig:SVI_and_h}, as an illustration we plot two examples of implied volatility smiles and their related functions $h$.
The SVI parameterization, introduced by Gatheral in 2004 \cite{SVI}, is defined by $w(k) = a + b \bigl(\rho (k - m) +\sqrt{(k-m)^2 + \sigma^2} \bigr)$, where $w(k) = v(k)^2$ denotes implied total variance.
SSVI, see \eqref{e:SSVI}, is sub-family of SVI based on three parameters $(\theta, \varphi, \rho)$ instead of five; related no-arbitrage conditions were analyzed in \cite{GathJacqSVI}.
A more detailed analysis of the SSVI framework and related applications will be carried out in section \ref{s:SSVI}.
In Figure \ref{fig:SVI_and_h}, the SVI in the left pane has the typical negative-skew shape observed in equity markets, while the positive-skew SSVI in the right pane reproduces a typical pattern observed for options on realized variance or options on the VIX index.

\begin{figure}[t]
	\centering
		\includegraphics[width=0.52\textwidth]{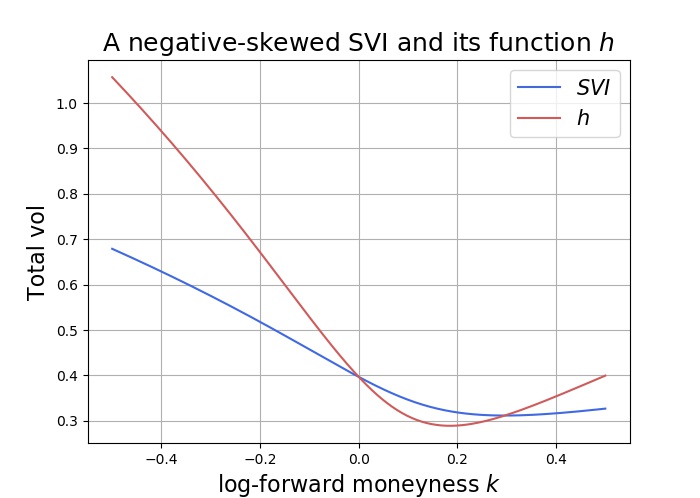}
		\hspace{-10mm}
		\includegraphics[width=0.52\textwidth]{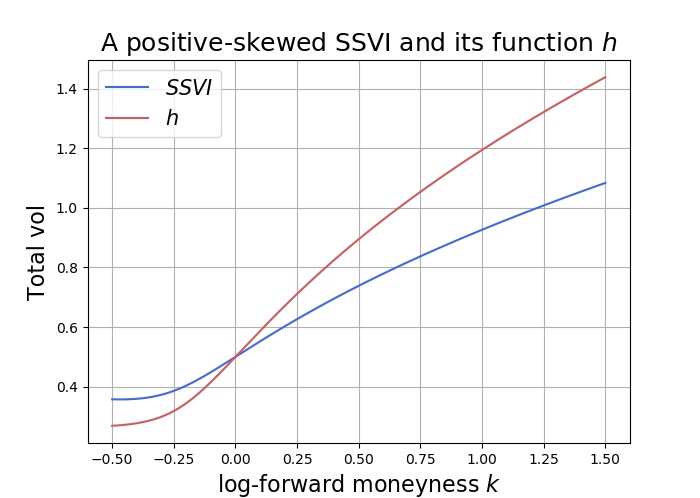}
	\caption{Examples of arbitrage-free implied volatilities and their related functions $h$ in Theorem \ref{t:harmonic_repr}.
	Left: SVI parameters are $a = 0.04, b = 0.4, \rho = -0.7, m = 0.1, \sigma = 0.2$.
	This set of parameters generate an arbitrage-free smile, as it can be checked using the procedure described in \cite{NoArbSVI}. 
	Right: SSVI parameters are $\theta=0.25, \rho = 0.7, \varphi = 3$.
	This set of parameters satisfies condition \eqref{e:noArbSSVIslice} and therefore generates an arbitrage-free smile.}
	\label{fig:SVI_and_h}
\end{figure}

\section{The normalizing transformation $f_{1/2}$}

The function $f_{1/2}(k) = \frac 12 \left(f_0(k) + f_1(k) \right)$ 
we have encountered in Theorem \ref{t:harmonic_repr} is a special example of log-strike transformation: it belongs to the family of maps
\[
\begin{aligned}
f_p(k)  &= p \, f_1(k) + (1-p) f_0(k) 
= \frac k{v(k)} + \Bigl(\frac 12 - p\Bigr) v(k),
\qquad \quad p \in [0,1].
\end{aligned}
\]
Owing to Theorem \ref{t:monotonicityf12}, the functions $f_p$ are strictly increasing.
When $p \neq 0$, they are also surjective:

\begin{lemma} \label{e:image_f_p}
For every $p \in (0,1]$, $\mathrm{Im}(f_p) := f_p(\R) = \mathbb R$, so that $f_p$ is bijective from $\R$ onto $\R$.
In particular, so is $f_{1/2}$.
\end{lemma}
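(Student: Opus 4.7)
The plan is to show that $f_p$ is a continuous, strictly increasing map $\R \to \R$ with $f_p(\pm\infty) = \pm\infty$, from which surjectivity will follow by the intermediate value theorem. Continuity of $f_p$ is immediate from Assumption \ref{a:assumption_v} and the defining formulas \eqref{e:f12}; strict monotonicity follows from Theorem \ref{t:monotonicityf12}, since $f_p = p f_1 + (1-p) f_0$ is a (strict) convex combination of strictly increasing functions. It therefore suffices to pin down the limits at $\pm\infty$, which already exist in $[-\infty,+\infty]$ by monotonicity.

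I would extract these limits from the identity $c_{\mathrm{BS}}(k,v(k)) = N(-f_1(k)) - e^k N(-f_0(k))$ (i.e.\ \eqref{e:normalized_BS_price} rewritten via \eqref{e:f12}), together with $c_{\mathrm{BS}}(k,v(k)) \to 0$ as $k \to +\infty$ and the analogous put-side relation $e^k N(f_0(k)) - N(f_1(k)) \to 0$ as $k \to -\infty$; both follow from Definition \ref{def:arb_free_impl_vol} (the second after a put-call parity manipulation, using $C(K)+K-F \to 0$ as $K \to 0$). A one-line contradiction argument then gives $f_0(+\infty) = +\infty$: if $f_0$ were bounded above, $N(-f_0(\cdot))$ would stay bounded below by a positive constant, making $e^k N(-f_0(k)) \to +\infty$ and violating the nonnegativity bound $e^k N(-f_0) \leq N(-f_1) \leq 1$. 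The symmetric argument on the put side yields $f_1(-\infty) = -\infty$.

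The delicate step, and the main obstacle, is $f_1(+\infty) = +\infty$. I would argue by contradiction: suppose $L := f_1(+\infty) \in \R$. The algebraic identity $f_0^2 - f_1^2 = (f_0+f_1)(f_0-f_1) = 2k$ yields $\phi(f_0(k)) = e^{-k}\phi(f_1(k))$. Combined with the Mill's-ratio asymptotic $N(-x) \sim \phi(x)/x$ as $x \to +\infty$ (applicable since $f_0(k) \to +\infty$ by the previous step), this gives
\[
e^k N(-f_0(k)) \;\sim\; \frac{\phi(f_1(k))}{f_0(k)} \;\longrightarrow\; 0,
\]
while $N(-f_1(k)) \to N(-L) > 0$; passing to the limit in the call-price identity contradicts $c_{\mathrm{BS}}(k,v(k)) \to 0$.

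To conclude for $p \in (0,1]$: at $+\infty$, both $f_0(+\infty)$ and $f_1(+\infty)$ equal $+\infty$, so $f_p(+\infty) = +\infty$. At $-\infty$, monotonicity of $f_0$ gives $f_0(k) \leq f_0(0)$ for $k \leq 0$, hence the upper bound $f_p(k) \leq p f_1(k) + (1-p) f_0(0)$ on $(-\infty,0]$; since $p > 0$ and $f_1(-\infty) = -\infty$, the right-hand side tends to $-\infty$, giving $f_p(-\infty) = -\infty$. Combined with continuity and strict monotonicity, this establishes the bijectivity $f_p: \R \to \R$. The assumption $p > 0$ is genuinely needed: $f_0(-\infty)$ may be finite when the pricing measure has an atom at $0$, so $f_0$ itself need not be surjective.
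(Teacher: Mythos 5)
Your proof is correct, but it takes a genuinely different route from the paper's. The paper handles the two delicate limits by \emph{citing} Roper's Theorem~2.9 for $f_1(+\infty)=+\infty$ and by a one-line arithmetic--geometric inequality for $f_1(-\infty)=-\infty$ (namely $f_1(k)=-\bigl(\tfrac{|k|}{v(k)}+\tfrac{v(k)}2\bigr)\le -\sqrt{2|k|}$ for $k<0$, a bound that uses only positivity of $v$ and no arbitrage structure at all). You instead re-prove Roper's lemma from scratch: you first get $f_0(+\infty)=+\infty$ from nonnegativity of the call, then $f_1(-\infty)=-\infty$ from nonnegativity of the put, and finally $f_1(+\infty)=+\infty$ from $c_{\mathrm{BS}}\to 0$ combined with Mill's ratio and the nice identity $\phi(f_0)=e^{-k}\phi(f_1)$ (which follows from $f_0^2-f_1^2=2k$). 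Both arguments are sound; yours is longer but self-contained, while the paper's is shorter and avoids any tail asymptotics by outsourcing the hard direction to a citation. Your final step for general $p\in(0,1]$ (bounding $f_p\le p f_1+(1-p)f_0(0)$ on $(-\infty,0]$ using that $f_0$ is increasing) matches the paper's logic, and correctly identifies why $p>0$ is essential. Two cosmetic remarks: the parity identity for the normalized prices reads $C(K)+K/F-1\to 0$ rather than $C(K)+K-F\to 0$ (which is the unnormalized version), though in fact your put-side contradiction only needs nonnegativity of the put, not its vanishing; and a convex combination with $p=1$ is not strict, but $f_1$ is strictly increasing on its own so the monotonicity conclusion still holds.
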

\begin{proof}
The no arbitrage condition $\lim_{K \to \infty} C(K) = 0$ for the call price \eqref{e:call_price} is equivalent to $\liminf_{k \to \infty} f_1(k) = \infty$, see \cite[Theorem 2.9]{Roper}.
Since $f_1(k)  = - \Bigl( \frac {|k|}{v(k)} + \frac 12 v(k) \Bigr) \le -\sqrt{2 |k|}$ for $k \mino 0$ by the arithmetic-geometric inequality, we have $\liminf_{k \to -\infty} f_1(k) = -\infty$, hence $\mathrm{Im}(f_1) = \mathbb R$.
Since $f_0(k) \maj 0$ for $k \maj 0$ and $f_0$ is decreasing,
we have $\liminf_{k \to \pm \infty} \bigl( p \, f_1(k) + (1-p) f_0(k) \bigr) = \pm \infty$, too, for every $p \in (0,1]$.
\end{proof}

\begin{remark} \label{rem:mass_zero}
Lemma \ref{e:image_f_p} is not true in general for $p=0$.
It holds that $\lim_{k \to -\infty} f_0(k) = N^{-1}(\mathbb P(X = 0))$, see \cite{TehPresentation,Fukasawa2010}, 
so that $\mathrm{Im}(f_0)$ fails to coincide with the whole $\mathbb R$ when $\mathbb P(X = 0) \neq 0$.
\end{remark}

Following Fukasawa \cite{Fukasawa2012},  we can see the map $f_{p}$ as a transformation of the log-strike variable, from $k$ to $z = f_p(k)$.
Denoting $g_p$ the inverse transformation from $\mathrm{Im}(f_p) = \R $ to $\R$,
\[
g_p(z) = f_p^{-1}(z),
\qquad z \in \R,
\]
the  so-called $p$-normalized implied volatilities $v_p$ are defined by $v_p(z) = v \left(g_p(z) \right)$, $z \in \R$.
In particular, the ``one-half normalized'' implied volatility is 
\[
v_{1/2}(z) = v \left(g_{1/2}(z) \right).
\]
The functions $v_p$ allow to  write particularly compact and elegant model-free pricing formulas for European claims, see \cite[Theorem 4.6]{Fukasawa2012} and \cite[Theorem 2.7]{DM_CM_MGFs} for a general treatment: celebrated examples are Chriss and Morokoff's formula \cite{ChrissMorok99} for the log-contract and Bergomi's formula \cite[Section 4.3.1]{Bergomi2016} for the moments of $X$ of order $p \in [0,1]$.
See Section \ref{s:pricing_formula} for more details, and for a detailed study of the SSVI case.

\begin{proposition} \label{p:harm_mean_arith_mean}
The log-strike transformation $k \to z = f_{1/2}(k)$ maps the harmonic mean representation
\[
v(k) = \frac 1{\frac 1k \int_0^k \frac 1{h(y)} \dd y}
\]
into the arithmetic mean representation
\be \label{e:arithm_mean}
v_{1/2}(z)
= \frac 1 z \int_0^z h_{1/2}(y) \dd y
\qquad \ \forall \, z \neq 0,
\ee
where $h_{1/2}(y) := h(g_{1/2}(y))$.
\end{proposition}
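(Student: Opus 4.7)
The plan is to translate everything through the change of variable $z = f_{1/2}(k)$, exploiting two ingredients already available: (a) the definition $f_{1/2}(k) = k/v(k)$, which directly encodes $v_{1/2}$ in terms of $g_{1/2}$, and (b) the identity $h = 1/f_{1/2}'$ from the proof of Theorem \ref{t:harmonic_repr}, which lets the inverse function theorem turn $h_{1/2}$ into the derivative of $g_{1/2}$.

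More precisely, I would proceed in three short steps. First, observe that $z = f_{1/2}(k) = k/v(k)$ evaluated at $k = g_{1/2}(z)$ yields $z = g_{1/2}(z)/v(g_{1/2}(z)) = g_{1/2}(z)/v_{1/2}(z)$, so that
\[
v_{1/2}(z) = \frac{g_{1/2}(z)}{z} \qquad \forall z \neq 0.
\]
Second, differentiate the identity $f_{1/2}(g_{1/2}(z)) = z$ to get $g_{1/2}'(z) = 1/f_{1/2}'(g_{1/2}(z)) = h(g_{1/2}(z)) = h_{1/2}(z)$, where the middle equality uses $h = 1/f_{1/2}'$. Third, note that $f_{1/2}(0) = 0/v(0) = 0$ (Assumption \ref{a:assumption_v} guarantees $v(0) > 0$), hence $g_{1/2}(0) = 0$, so integrating from $0$ to $z$ gives
\[
g_{1/2}(z) = \int_0^z h_{1/2}(y) \, \dd y.
\]
Dividing by $z$ and using the formula for $v_{1/2}(z)$ from the first step yields \eqref{e:arithm_mean}.

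There is essentially no obstacle: once one notices that $f_{1/2}(k) = k/v(k)$ literally \emph{is} the harmonic-mean identity rewritten as a single value of the transformation, the proof reduces to applying the inverse function theorem to $f_{1/2}$ — which is legitimate because Theorem \ref{t:monotonicityf12} and Lemma \ref{e:image_f_p} already give us that $f_{1/2}$ is a $C^1$-diffeomorphism of $\R$. The only minor point worth a line of justification is that $f_{1/2}(0) = 0$, which fixes the lower integration bound and pins down the constant of integration in the third step. Alternatively, one can obtain the same conclusion via direct change of variable $y = g_{1/2}(u)$ in $\int_0^k \frac{\dd y}{h(y)} = k/v(k)$, but the inverse function theorem route is cleaner.
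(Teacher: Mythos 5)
Your proof is correct and follows essentially the same route as the paper: both establish $z\,v_{1/2}(z) = g_{1/2}(z)$ from $f_{1/2}(k) = k/v(k)$, compute $g_{1/2}'(z) = h_{1/2}(z)$ via the inverse function theorem, and integrate from $0$ using $g_{1/2}(0)=0$. The only difference is cosmetic — you make the justification of $g_{1/2}(0)=0$ and the invocation of Lemma \ref{e:image_f_p} a touch more explicit than the paper does.
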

\begin{proof}
Let $k \in \R$. We have $k = g_{1/2}(z)$ if and only if $f_{1/2}(k) = z$.
Since $f_{1/2}(k) = \frac k {v(k)}$, we deduce $z = \frac {g_{1/2}(z)}{v_{1/2}(z)}$, or yet $z \, v_{1/2}(z) = g_{1/2}(z)$.
Since $g_{1/2}(0) = 0$ and
\[
g_{1/2}' (z) = \frac 1{f_{1/2}'(g_{1/2}(z))}
= h(g_{1/2}(z)) = h_{1/2}(z),
\]
we obtain $z \, v_{1/2}(z) = \int_0^z h_{1/2}(y) \dd y$, which proves \eqref{e:arithm_mean}.
\end{proof}

\textbf{Put-Call duality}. 
When $\Prob (X=0)$ (equivalently: when $\lim_{k \to -\infty} f_0(k) = -\infty$, see Remark \ref{rem:mass_zero}), the put-call symmetry relation in the Black-Scholes model implies that the mirrored function $\hat v(k) = v(-k)$ is still an arbitrage-free total implied volatility, associated to a pricing model $\hat X$ via equation \eqref{e:def_total_impl_vol}.
The distribution of the dual model $\hat X$ can be related to that of $X$ via a change of measure; see \cite{CarrLee} and section 3 in \cite{DM_CM_MGFs} for more details.
As notice in \cite{DM_CM_MGFs}, it is straightforward to check how the maps $f_p$ and the $p$-normalized implied volatilities change under the duality transformation $v \mapsto \hat v$:
\[
\hat f_p(k) = - f_{1-p}(-k),
\qquad 
\hat v_p(k) = v_{1-p}(-k),
\qquad \quad
p \in [0,1].
\]
The function $h$ is also mirrored  under the duality transformation.
Indeed, denote $\hat h$ the unique function associated to $\hat v$ in the harmonic mean representation \eqref{e:harmonic_mean}: we have
\[
\frac 1 {\hat h (k)} = \frac{\dd}{\dd x} \frac x {\hat v(x)}
=
- \frac{\dd}{\dd x} \frac {-x}{v(-x)}
=
- \frac{\dd}{\dd x} f_{1/2}(-x)
= 
f_{1/2}'(-x)
=
\frac 1 {h (-k)},
\]
so that $\hat h (k) = h(-k)$, for all $k \in \R$.

\section{Link with Dupire's local volatility} \label{s:Dupire_LV}

In this section, we consider a total implied volatility surface $v: (T,k) \in [0,\infty) \times \R \mapsto v(T,k)$, where $T$  denotes the option's time to maturity and $k = \log \frac K{F^T}$ the corresponding forward log-moneyness.
When arbitrage-free, such a function always satisfies $v(0, \cdot) \equiv 0$.
In this section, we assume that $v$ is strictly positive for strictly positive maturities and that the surface is smooth:

\begin{assumption} \label{a:assumption_v_2}
The function $(T,k) \mapsto v(T,k)$ is $C^{1,2}\left((0,\infty) \times \R\right)$ and $v(T,k) \maj 0$ for every $k \in \R$ and $T \maj 0$.
\end{assumption}

\noindent In particular, Assumption \ref{a:assumption_v} holds for every fixed $T$.
According to Theorem \ref{t:harmonic_repr}, there exists a unique strictly positive function $h$ defined over $(0,\infty) \times \R$ satisfying
\be \label{e:harmonic_mean_T}
v(T,k) = \frac 1{\frac 1k \int_0^k \frac 1{h(T,y)} \, \dd y}
\qquad \forall \, k \neq 0,  \ \forall \, T \maj 0 \,.
\ee

Assuming a market with constant interest rate $r$ and repo (or dividend) rate $q$, so that $F^T = S_0 e^{(r-q)T}$, Dupire's formula for the local volatility $\sigDup$ reads
\be \label{e:Dup_total_vol}
\begin{aligned}
\sigDup(T, k)^2
&= 
2 \, \frac
{
\frac{\dd}{\dd T} \Bigl(
e^{- r T} F^T
c_{\mathrm{BS}} \Bigl(\log \frac K {F^T}, v\bigl(T, \log \frac K {F^T} \bigr)\Bigr) 
\Bigr)
}
{
e^{- r T} F^T
K^2 \frac{\dd^2}{\dd K^2} c_{\mathrm{BS}}\Bigl(\log \frac K {F^T}, v\bigl(T, \log \frac K {F^T} \bigr)\Bigr)
}
\Biggr|_{K= F^T e^k}
\\
&= 
2 \, \frac{\partial_T v(T,k)}
{\Bigl(
v''
- \frac{v} 4 (v')^2
+ \frac 1 {v} \left(1 - \frac{k \, v'}{v}\right)^2
\Bigr)(T,k)}
\qquad \quad
\forall \, T \maj 0, \ k \in \R,
\end{aligned}
\ee
where we denote $v'(T,k):= \partial_k v(T,k)$, $v''(T,k) := \partial_{kk} v(T,k)$.
The derivation of the expression of $\sigDup$ in terms of the total implied volatility and its space-time derivatives
can be found for example in Lee \cite{Lee2005}.
It has already been noted, see again \cite{Lee2005}, that the rightmost summand in the denominator of \eqref{e:Dup_total_vol} contains precisely the squared derivative of the function $\frac k{v(T,k)}$ with respect to $k$, since $\frac 1 {v^2} \bigl(1 - \frac{k \, v'}{v}\bigr)^2 =   \bigl(\frac \dd{\dd k} \frac k{v} \bigr)^2 = \frac{1}{h^2}$, see \eqref{e:derivative_H}.
This remark provides a quick way to infer Berestycki et al.'s asymptotic formula \eqref{e:BBF_formula} from Dupire's equation \eqref{e:Dup_total_vol}.
Following \cite{Lee2005}, we can re-express \eqref{e:Dup_total_vol} in terms of the standard implied volatility $\sigBS(T, k) = \frac 1{\sqrt T} v(T,k)$: since $v' = \sqrt T \sigBS'$, $v'' = \sqrt T \sigBS''$ and $\partial_T v = \frac{\sigBS}{2 \sqrt T}  + \sqrt T \partial_T \sigBS$, we obtain 
\be \label{e:Dup_implied_vol}
\sigDup(T,k)^2 = 
\frac{\sigBS(T,k) + 2 T \partial_T \sigBS(T,k)}
{\Bigl(
T \sigBS ''
- \frac 14 T^2 \sigBS (\sigBS')^2
+ \frac 1 {\sigBS} \left(1 - \frac{k \, \sigBS'}{\sigBS}\right)^2 \Bigr)(T,k)}.
\ee
Formally taking $T=0$ inside \eqref{e:Dup_implied_vol}, one obtains an ODE for the function $\sigBS|_{T=0}$, namely
$\frac{\sigBS(0,k)^2}{\bigl(1 - \frac{k \, \sigBS'(0,k)}{\sigBS(0,k)}\bigr)^2} = \sigDup(0,k)^2$.
Now, it is straightforward to see that the function $\frac 1{\frac 1k \int_0^k \frac 1{\sigDup(0,y)} \dd y}$ solves this differential equation.
One could therefore conjecture that the latter	 is the limit of $\sigBS$ as $T \to 0$; Berestycki et al.'s \cite{BBF} show that this is actually the case, under the assumption that the local volatility is uniformly continuous, bounded and bounded away from zero on $[0,T] \times \R$ for some $T \maj 0$.

If we take $T$ to be fixed (but not equal to zero), dividing both sides of \eqref{e:Dup_implied_vol} by $\sigBS(T,k)$ (which is possible  under Assumption \ref{a:assumption_v_2}) and rearranging terms,  we get 
\be \label{e:identity_Sigma_Dup_2}
\begin{aligned}
\sigDup(T, k)^2
\biggl(\frac \dd{\dd k} \frac k{\sigBS(T,k)} \biggr)^2
=
\biggl[
1
+ T \biggl( \frac{2 \, \partial_T \sigBS}{\sigBS} - \frac{\sigBS''}{\sigBS} \sigDup^2  \biggr)
+ \frac 14 T^2 (\sigBS')^2 \sigDup^2
\biggr](T, k) \,.
\end{aligned}
\ee
Combining with Theorem \ref{t:harmonic_repr}, we can formulate an  equation relating the harmonic mean function $h$ to the local volatility.

\begin{proposition} \label{p:link_Dupire}
Let $v$ be a total implied volatility surface satisfying Assumption \ref{a:assumption_v_2}, and denote  $\sigBS(T,k) = \frac 1{\sqrt T} v(T,k)$ the corresponding implied volatility.
Denote $h: (0,\infty) \times \R \to (0,\infty)$ the function in  the harmonic mean representation \eqref{e:harmonic_mean_T}.
For every $(T,k)$ such that $\partial_T v(T,k) \maj 0$, we have
\be \label{e:Sigma_T}
\Sigma(T,k) :=
\frac1{\sqrt T} \, h(T,k)
= \frac{\sigDup(T, k)}
{\sqrt{ 1
+ T a(T,k)
+ T^2 b(T,k)
}} \,,
\ee
where
\[
a(T,k) = \biggl( \frac{2 \, \partial_T \sigBS}{\sigBS} - \frac{\sigBS''}{\sigBS} \sigDup^2 \biggr)(T,k),
\qquad
b(T,k) =\frac 14 \sigBS'(T,k)^2 \sigDup(T, k)^2.
\]
\end{proposition}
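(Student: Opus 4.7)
The plan is to do a direct algebraic manipulation, combining Theorem \ref{t:harmonic_repr} (applied slice-wise in $T$) with the identity \eqref{e:identity_Sigma_Dup_2}, which has already been derived above from Dupire's formula. No further analytical input is needed; the only subtle point is the sign condition ensuring that the square root in \eqref{e:Sigma_T} makes sense.

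First, I would apply Theorem \ref{t:harmonic_repr} at each fixed $T > 0$ to the slice $k \mapsto v(T,k)$. Assumption \ref{a:assumption_v_2} guarantees Assumption \ref{a:assumption_v} holds for this slice, so equation \eqref{e:derivative_H} yields
\[
\frac{1}{h(T,k)} = \frac{\partial}{\partial k} \frac{k}{v(T,k)}
\qquad \forall \, (T,k) \in (0,\infty) \times \R.
\]
Since $\sigBS(T,k) = \frac{1}{\sqrt T} v(T,k)$, we have $\frac{k}{\sigBS(T,k)} = \sqrt T \, \frac{k}{v(T,k)}$, hence
\[
\frac{\partial}{\partial k}\frac{k}{\sigBS(T,k)} = \frac{\sqrt T}{h(T,k)},
\qquad
\biggl(\frac{\partial}{\partial k}\frac{k}{\sigBS(T,k)}\biggr)^2 = \frac{T}{h(T,k)^2}.
\]

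Next, I would substitute this expression into identity \eqref{e:identity_Sigma_Dup_2}, whose right-hand side is, by definition of $a$ and $b$, exactly $1 + T\,a(T,k) + T^2 b(T,k)$. This yields
\[
\frac{T \, \sigDup(T,k)^2}{h(T,k)^2} = 1 + T\, a(T,k) + T^2 b(T,k).
\]
Solving for $h(T,k)^2$ and using $\Sigma(T,k) = h(T,k)/\sqrt T$ gives
\[
\Sigma(T,k)^2 = \frac{h(T,k)^2}{T} = \frac{\sigDup(T,k)^2}{1 + T\,a(T,k) + T^2 b(T,k)}.
\]

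Finally, to extract \eqref{e:Sigma_T} by taking the positive square root, I would observe that $h$ is strictly positive by Theorem \ref{t:harmonic_repr}, so the left-hand side is positive; the hypothesis $\partial_T v(T,k) > 0$ combined with Dupire's formula \eqref{e:Dup_total_vol} (whose denominator is strictly positive under Assumption \ref{a:assumption_v_2} since it equals $FK\phi(f_0) \cdot (v'' + v f_0' f_1')$ and the call price density is strictly positive) ensures $\sigDup(T,k) > 0$. Hence both sides are positive and the square root can be taken unambiguously, giving precisely \eqref{e:Sigma_T}. There is no real obstacle in the argument: the work has already been done in establishing \eqref{e:derivative_H} and \eqref{e:identity_Sigma_Dup_2}, and what remains is purely mechanical.
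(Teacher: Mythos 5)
Your proof is correct and follows essentially the same route as the paper: both combine identity \eqref{e:identity_Sigma_Dup_2} with the slice-wise harmonic mean representation from Theorem \ref{t:harmonic_repr} (i.e., $\frac{\partial}{\partial k}\frac{k}{\sigBS(T,k)} = \frac{\sqrt T}{h(T,k)}$) and then take the positive square root using the strict positivity of $h$ and of $\sigDup$ (the latter inferred from $\partial_T v > 0$ via Dupire's formula). The only cosmetic difference is that the paper divides \eqref{e:identity_Sigma_Dup_2} by $\sigDup^2$ before identifying the derivative term, whereas you substitute first and then solve for $h^2$; these are the same algebra.
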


Since $\sigBS = \frac 1 {\sqrt T} v$, note that the function $\Sigma$ defined in \eqref{e:Sigma_T} appears in the harmonic mean representation of the standard Black-Scholes implied volatility: we have 
\be \label{e:harmonic_mean_IV}
\frac 1{\sigBS(T,k)} = \frac 1k \int_0^k \frac 1{\Sigma(T,y)} \dd y
\ee
for every $T \maj 0$ and $k \neq 0$. 

\begin{proof}
It follows from \eqref{e:Dup_total_vol} that  $\sigDup(T, k) \maj 0$ if and only if $\partial_T v(T,k) \maj 0$ for the same point $(T,k)$.
For such $(T,k)$, we can divide both sides of \eqref{e:identity_Sigma_Dup_2} by $\sigDup(T, k)^2$, obtaining that $\bigl( \frac \dd{\dd k} \frac k{\sigBS(T,k)} \bigr)^{-2}$ is equal to the square of the right hand side of \eqref{e:Sigma_T}.
According to Theorem \ref{t:harmonic_repr}, 
$\frac \dd{\dd k} \frac k{\sigBS(T,k)}$ is positive and equal to $
\frac {\sqrt T} {h(T,k)}$, which concludes the proof of  \eqref{e:Sigma_T}.
\end{proof}

We note in passing that equation \eqref{e:identity_Sigma_Dup_2} does not allow to infer that the the function $\frac \dd{\dd k} \frac k{\sigBS(T,k)}$ is positive; this information is provided by Theorem \ref{t:harmonic_repr}.

The functions  $a$ and $b$ in Proposition \ref{p:link_Dupire} depend on the implied volatility $\sigBS$ and its space-time derivatives: in this respect, Proposition \ref{p:link_Dupire} does provide an explicit link between the local volatility $\sigDup$ and the functions $h$ and $\Sigma$ (in the sense: it does not allow to evaluate $h$ or $\Sigma$ explicitly from the knowledge of the function $\sigDup$).
Nevertheless,  we can interpret equation \eqref{e:Sigma_T} when the maturity $T$ is small: 
if the functions $a$ and $b$ remain bounded as $T \to 0$ (which is the case if $\sigBS$ and its derivatives have non-trivial limits as $T \to 0$),  equation \eqref{e:Sigma_T} takes the form
\[
\Sigma = \frac 1{\sqrt T} \, h = 
\sigDup
\bigl (1 + O(T) \bigr)\,,
\]
where the $O(T)$ correction term is precisely
$-\frac 12 T a(T,k)$.

\begin{remark}
The boundedness of the functions $a$ and $b$ as $T \to 0$ hinges on the boundedness of $\sigBS, \sigBS', \sigBS''$ and $\partial_T \sigBS$.
When the law of the underlying asset price is specified via a stochastic model, the asymptotic behavior of $\sigBS$ and of the strike derivatives $\sigBS'$ and $\sigBS''$ can be assessed for certain models including stochastic volatility models, see \cite{FukasawaSkew} and \cite{AlosLeonCurvature}, but boundedness might fail to hold in certain cases, as in rough fractional stochastic volatility models where $\lim_{T \to 0} \sigBS'(T,0) = \infty$, see \cite{Alos2007, FukasawaSkew, PricingRV}.
It is nevertheless interesting to notice that, while the function $\sigBS$ and its harmonic mean counterpart $\Sigma$ always satisfy the 1/2--skew rule 
\[
\frac {\dd \sigBS(T,k)}{\dd k} 	 \biggl|_{k=0}
= \frac 12  \frac {\dd \Sigma(T,k)}{\dd k}  \biggl|_{k=0}
\qquad \forall \, T \maj 0
\]
(see Corollary \ref{cor:h_equal_v}), Proposition \ref{p:link_Dupire} allows to characterize the possible situations where the implied volatility $\sigBS$ and the local volatility $\sigDup$ \emph{do not} satisfy the 1/2--skew rule in the short time limit\footnote{even if this criterion might be difficult to apply in practice, for it requires to evaluate the asymptotic behavior of the functions $a$ and $b$.}: they are precisely the situations where 
$\lim_{T \to 0} \frac \dd {\dd k} \Sigma(T,k)|_{k=0}$ $\neq  \lim_{T \to 0} \frac \dd {\dd k} \sigDup(T,k)|_{k=0}$.
\end{remark}

\subsection{Short-dated normalized implied volatility}

Since the total implied volatility $v(T,k)$ tends to zero for all $k$ as $T \to 0$, the normalizing transformations $f_{1/2}(T, k) = \frac k{v(T,k)}$ become trivial for short maturity, in the sense
\be \label{e:limit_f_one_half}
\lim_{T \to 0} f_{1/2}(T, k) =
\left\{ \begin{array}{l l}
+\infty & \mbox{if } k \maj 0
\\
0 & \mbox{if } k = 0
\\
-\infty & \mbox{if } k \mino 0 \,.
\end{array} \right.
\ee
The inverse transformations $g_{1/2}(T,\cdot) = f_{1/2}(T, \cdot)^{-1}$ therefore flatten out as $T \to 0$: we have $\lim_{T \to 0} g_{1/2}(T,z) = 0$ for every $z \in \R$.
As a consequence, the normalized implied volatility  $\sigma_{1/2}(T,z) :=
\frac 1{\sqrt T} \, v \left(T, g_{1/2}(T,z) \right)$ tends to a constant:
\be \label{e:limit_sigma_one_half}
\sigma_{1/2}(T, z)
=  \sigBS \left(T, g_{1/2}(T,z) \right)
\rightarrow
\sigBS \left(0,	0 \right)
\qquad 
\mbox{as } T \to 0 \,,
\ee
where the last limit holds whenever the implied volatility $\sigBS(T,k)$ tends to a limiting function $\sigBS \left(0, k\right)$
uniformly in a neighbourhood of $k = 0$.

Equation \eqref{e:limit_sigma_one_half} shows that the normalized implied volatility $\sigma_{1/2}$ is a rather uninteresting object as $T$ becomes small.
We rather expect the time-rescaled function $\sigma_{1/2} \Bigl(T,  \frac z{\sqrt T}\Bigr)$ to have a non-trivial limit as $T \to 0$, see Figure \ref {fig:sigma_one_half_SSVI}.
\begin{figure}[t]
	\centering
		\includegraphics[width=0.52\textwidth]{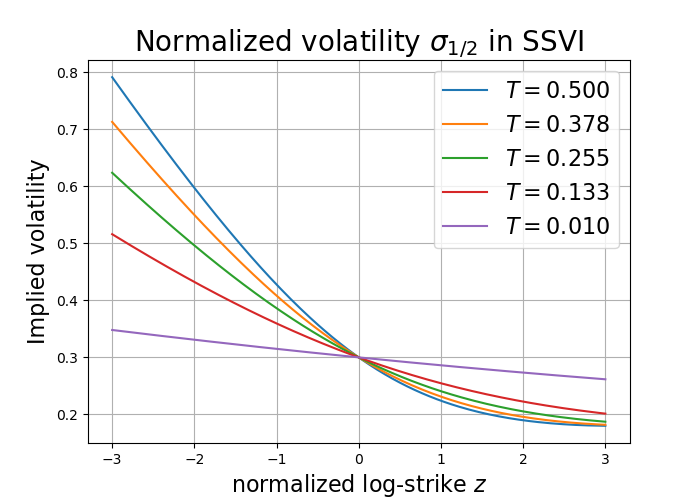}
		\hspace{-10mm}
		\includegraphics[width=0.52\textwidth]{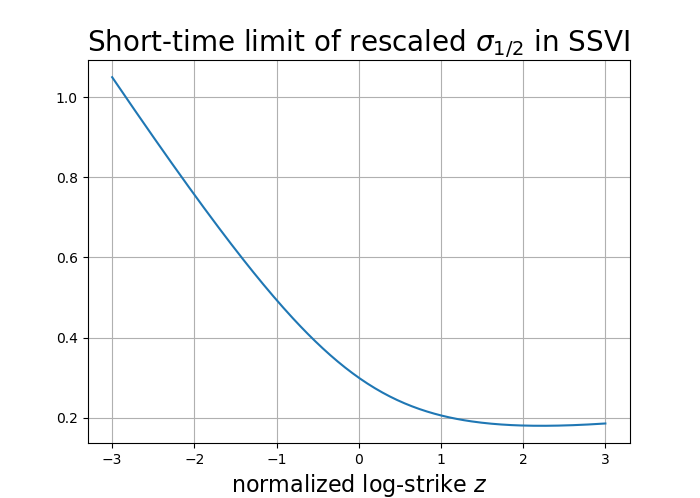}
	\caption{
	Normalized implied volatility $\sigma_{1/2}$ for the SSVI surface $v_{\mathrm{SSVI}}$ \eqref{e:SSVI} with parameters $\theta_T = \theta \, T = (0.3)^2 \, T$, $\varphi_T=\varphi=4$ and $\rho_T=\rho=-0.8$.
	The triple $(\theta, \varphi, \rho)$ satisfies the no-arbitrage condition \eqref{e:noArbSSVIslice}, so that the resulting SSVI surface
is arbitrage-free for $T \le 1$.
We used the explicit expression of $\sigma_{1/2} = \frac 1{\sqrt T} v_{1/2}$ in Proposition \ref{e:SSVI_one_half_vol} to obtain the plots.
Left: the normalized implied volatility $\sigma_{1/2}$ tends to the constant value $\sqrt{\theta}$, see \eqref{e:limit_sigma_one_half}.
	Right: the time-rescaled function $\sigma_{1/2} \bigl(T,  \frac z{\sqrt T}\bigr)$ has a non-constant limit when $T \to 0$.}
	\label{fig:sigma_one_half_SSVI}
\end{figure}

We have already seen in Proposition \ref{p:harm_mean_arith_mean} that, for every $T$, the log-strike transformation $z = f_{1/2}(T, k)$ maps a coordinate system where the implied volatility $\sigBS$ is the harmonic mean of a positive function $\Sigma$ into a system where the new implied volatility $\sigma_{1/2}$ is the arithmetic mean of the transformed function $\Sigma_{1/2}$: dividing both sides of \eqref{e:arithm_mean} by $\sqrt T$ and using $\Sigma = \frac{h}{\sqrt T}$,
we have
\[
\sigma_{1/2}(T,z) = \frac 1 z \int_0^z \Sigma_{1/2}(T, y) \dd y
\qquad \forall \, T \maj 0,
\]
where $\Sigma_{1/2}(T, y) := \Sigma(T, g_{1/2}(T,y))$.
We want to investigate the consequence of this fact in the small-maturity limit, making the link with the local volatility $\sigDup$ explicit.
To this end, we also define the normalized local volatility
\[
\sigma_{\mathrm{Dup}, 1/2} \bigl(T,  z \bigr) := \sigDup \bigl(T, g_{1/2} (T,  z ) \bigr)
\quad
\forall \, T \maj 0, \ z \in \R.
\]

Theorem \ref {t:arithm_mean_formula} below essentially states that, if the implied volatility surface $\sigBS$ has a non-degenerate behavior for small maturity (in the precise sense of Assumption \ref{a:short_time_limit}), then the rescaled normalized implied volatility $\sigma_{1/2} \bigl(T,  \frac z{\sqrt T}\bigr)$ and local volatility $\sigma_{\mathrm{Dup}, 1/2}  \bigl(T,  \frac z{\sqrt T}\bigr)$ also have non-trivial short-maturity limits, and these limits obey an \emph{arithmetic} mean formula.

\begin{assumption} \label{a:short_time_limit}
The implied volatility $\sigBS$ and the  local volatility $\sigDup$ have non-trivial limits in the short maturity regime, in the sense:
\begin{itemize}
\item[(i)] There exists a strictly positive and differentiable function $\sigBS(0,\cdot)$  such  that $\sigBS(T,k) \to \sigBS(0,k)$ together with $\partial_k \sigBS(T,k) \to \partial_k \sigBS(0,k)$ as $T \to 0$, uniformly over $k$ in compact sets.
\item[(ii)] The function $k \mapsto \frac k{\sigBS(0,k)}$ is strictly increasing from $\R$ onto $\R$.
\item[(iii)] The local volatility has a short-maturity limit: $\sigDup(T,k) \to \sigDup(0,k)$ as $T \to 0$ for some strictly positive function $\sigDup(0,\cdot)$, uniformly over $k$ in compact sets.
\end{itemize}
\end{assumption}

\begin{theorem}[Arithmetic mean formula for the short-dated implied volatility $\sigma_{1/2}$] \label{t:arithm_mean_formula}
Let Assumption \ref{a:short_time_limit} be in force.
Then, the rescaled normalized implied volatility $\sigma_{1/2} \bigl(T,  \frac z{\sqrt T}\bigr)$ and normalized local volatility $\sigma_{\mathrm{Dup}, 1/2} \bigl(T,  \frac z{\sqrt T}\bigr)$ have non-trivial limits as $T \to 0$:
\be \label{e:limits_normalized_implied_local_vol}
\sigma_{1/2} \biggl(T,  \frac z{\sqrt T}\biggr)
\longrightarrow
\sigma_{1/2}(z),
\qquad
\sigma_{\mathrm{Dup}, 1/2} \biggl(T,  \frac z{\sqrt T}\biggr)
\longrightarrow
\sigma_{\mathrm{Dup}, 1/2} (z),
\ee
where the convergence is uniform over $z$ in compact sets.
If, in addition, the two functions $\partial_T \sigBS(T,k)$ and $\partial_{kk} \sigBS(T,k)$ remain bounded as $T \to 0$, we have the following arithmetic mean formula
\be \label{e:short_time_arith_mean_formula}
\sigma_{1/2}(z)
= \frac 1z \int_0^z 
\sigma_{\mathrm{Dup}, 1/2} (y) \, \dd y \,,
\qquad
\forall \ z \in \R.
\ee
\end{theorem}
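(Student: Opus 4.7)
The plan has two main steps corresponding to the two assertions of the theorem.

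\textbf{Step 1: existence of the limits.} The key observation is that the rescaling $z \to z/\sqrt T$ exactly compensates the factor $\sqrt T$ hidden inside $f_{1/2}(T, k) = k/(\sqrt T \sigBS(T, k))$. Introducing the auxiliary map $\tilde f(T, k) := k/\sigBS(T,k) = \sqrt T\, f_{1/2}(T,k)$, the identity $f_{1/2}(T, k) = z/\sqrt T$ is equivalent to $\tilde f(T, k) = z$, hence $g_{1/2}(T, z/\sqrt T) = \tilde g(T, z) := \tilde f(T, \cdot)^{-1}(z)$. By Assumption \ref{a:short_time_limit}(i), $\tilde f(T, \cdot) \to \tilde f(0, \cdot)$ uniformly on compact sets of $k$, with $\tilde f(0, k) := k/\sigBS(0,k)$; by Assumption (ii), the limit is a strictly increasing homeomorphism of $\R$. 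A standard argument (monotonicity, uniform convergence on compacts, together with properness of the limit) transfers convergence to the inverses, so that $\tilde g(T, z) \to \tilde g(0, z) := \tilde f(0, \cdot)^{-1}(z)$ uniformly for $z$ in compact sets; in particular, $\{\tilde g(T, z) : T \text{ small}, z \in K\}$ stays in a compact set of $\R$ whenever $K \subset \R$ is compact. Combined with the uniform convergence on compacts of $\sigBS(T, \cdot)$ and $\sigDup(T, \cdot)$ (Assumptions (i) and (iii)), this establishes \eqref{e:limits_normalized_implied_local_vol} with $\sigma_{1/2}(z) = \sigBS(0, \tilde g(0, z))$ and $\sigma_{\mathrm{Dup}, 1/2}(z) = \sigDup(0, \tilde g(0, z))$.

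\textbf{Step 2: arithmetic mean formula.} The arithmetic mean representation of Proposition \ref{p:harm_mean_arith_mean} applied at maturity $T$, evaluated at $z/\sqrt T$, gives after the change of variable $y = \tilde y/\sqrt T$
\[
\sigma_{1/2}(T, z/\sqrt T) = \frac{1}{z} \int_0^z \Sigma\bigl(T, \tilde g(T, \tilde y)\bigr)\, d\tilde y.
\]
Proposition \ref{p:link_Dupire} yields $\Sigma(T, k) = \sigDup(T, k)/\sqrt{1 + T a(T, k) + T^2 b(T, k)}$. Under the additional hypothesis that $\partial_T \sigBS$ and $\partial_{kk} \sigBS$ remain bounded as $T \to 0$, combined with Assumption (i) (which controls $\sigBS$ and $\partial_k \sigBS$, and via the positivity and continuity of the limit bounds $\sigBS$ away from zero on compacts), the functions $a(T, \cdot)$ and $b(T, \cdot)$ are uniformly bounded on compact sets of $k$ for $T$ small. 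Using also Assumption (iii), we deduce $\Sigma(T, k) \to \sigDup(0, k)$ uniformly on compacts of $k$. Composing with the uniform convergence $\tilde g(T, \tilde y) \to \tilde g(0, \tilde y)$ from Step 1, the integrand in the display above tends to $\sigma_{\mathrm{Dup}, 1/2}(\tilde y)$ uniformly in $\tilde y$ on the interval of integration, and we exchange limit and integral to obtain \eqref{e:short_time_arith_mean_formula}.

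The main obstacle I anticipate is the uniform convergence $\Sigma(T, k) \to \sigDup(0, k)$ on compacts rather than merely pointwise: it requires the boundedness hypotheses on $\sigBS$, $\partial_T \sigBS$, $\partial_k \sigBS$, $\partial_{kk} \sigBS$ to be uniform on compact sets of $k$ as $T \to 0$, which is where the extra assumption in the second half of the theorem enters in an essential way. A secondary but also essential point is the uniform convergence on compacts of the inverse maps $\tilde g(T, \cdot)$, for which the surjectivity of $\tilde f(0, \cdot)$ onto $\R$ in Assumption (ii) is precisely what rules out escape-to-infinity phenomena.
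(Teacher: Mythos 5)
Your proposal is correct and follows essentially the same strategy as the paper's proof: transfer uniform convergence of the rescaled maps $\tilde f(T,\cdot)=k/\sigBS(T,\cdot)$ to their inverses (this is exactly what the paper's Appendix argument for \eqref{e:rescaled_g_limit} does, with the same notation $\overline f_{1/2}$, $\overline g_{1/2}$), then use Proposition \ref{p:harm_mean_arith_mean} at fixed $T$ with a $\sqrt T$-change of variables in the integral, identify $\Sigma(T,\cdot)\to\sigDup(0,\cdot)$ on compacts via Proposition \ref{p:link_Dupire} and the boundedness of $a,b$, and pass to the limit. The only organizational difference is that you obtain the first limit in \eqref{e:limits_normalized_implied_local_vol} directly by composition ($\sigma_{1/2}(z)=\sigBS(0,\overg_{1/2}(z))$), whereas the paper obtains it through the integral representation $\frac1z\int_0^z\Sigma_{1/2}(x)\,\dd x$ and only later identifies $\Sigma_{1/2}$ with $\sigma_{\mathrm{Dup},1/2}$ -- this has the slight advantage of cleanly separating the part of the statement that does not need the extra boundedness hypothesis from the part that does, but the two routes are logically equivalent.
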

\begin{proof}
Denote $\overg_{1/2}$ the inverse of the function $k \mapsto \frac k {\sigBS(0,k)}$.
We claim that
\be \label{e:rescaled_g_limit}
g_{1/2}\biggl(T,  \frac z{\sqrt T}\biggr) \longrightarrow
\overg_{1/2}(z)
\qquad
\mbox{as $T \to 0$, uniformly over $z$ in compact sets}.
\ee
Let us  postpone the proof of \eqref{e:rescaled_g_limit}to Appendix \ref{s:app} and focus on the asymptotic behavior of $\Sigma_{1/2} \Bigl(T,  \frac z{\sqrt T}\Bigr)$ and $\sigma_{1/2} \Bigl(T,  \frac z{\sqrt T}\Bigr)$.
Recall from \eqref{e:harmonic_mean_IV} that $\frac 1 {\Sigma(T,k)} = \frac {\dd}{\dd k} \frac k {\sigBS(T,k)} = \frac 1 {\sigBS(T,k)} \Bigl( 1 - k \, \frac{\sigBS'(T,k)}{\sigBS(T,k)} \Bigr)$ for every $T \maj 0$.
According to point (ii) in Assumption \ref{a:short_time_limit}, we can define the function $\Sigma(0,k)$ from $\frac 1 {\Sigma(0,k)} = \frac {\dd}{\dd k} \frac k {\sigBS(0,k)}$, so that
\be \label{e:uniform_limit_Sigma}
\Sigma(T,k) 
= 
\frac{\sigBS(T,k)}
{\Bigl( 1 - k \, \frac{\sigBS'(T,k)}{\sigBS(T,k)} \Bigr)}
\longrightarrow
\frac{\sigBS(0,k)}
{\Bigl( 1 - k \, \frac{\sigBS'(0,k)}{\sigBS(0,k)} \Bigr)}
= 
\Sigma(0,k)
\qquad
\mbox{as } T \to 0
\ee
holds uniformly over $k$ in compact sets, owing to Assumption \ref{a:short_time_limit} (i).
For simplicity, denote $k^z_T = g_{1/2}\bigl(T,  \frac z{\sqrt T}\bigr)$ and $k^z_0 = \overg_{1/2}(z)$.
For every compact set $C \subset \R$, 
\[
\sup_{z \in C} \left|\Sigma(T, k^z_T) - \Sigma(0, k^z_0) \right|
\le
\sup_{z \in C} \left|\Sigma(T, k^z_T) - \Sigma(0, k^z_T) \right|
+
\sup_{z \in C} \left|\Sigma(0, k^z_T) - \Sigma(0, k^z_0) \right|.
\]
The first term on the right hand side tends to zero as $T \to 0$ because, on the one hand, $\Sigma(T, k) - \Sigma(0, k) \to 0$ uniformly over $k$ in compacts,
and on the other hand, according to \eqref{e:rescaled_g_limit}, $\sup_{z \in C} |k^z_T| \le \sup_{z \in C} |\overg_{1/2}(z)| + 1$ for $T$ small enough.
The second term also tends to zero as $T \to 0$ because $k^z_T \to k^z_0$ uniformly over $z \in C$ and $\Sigma(0, \cdot)$ is continuous by assumption.
Summing up, we have shown that
\be \label{e:convergence_Sigma}
\Sigma \biggl(T,  g_{1/2} \biggl( T,  \frac z{\sqrt T}\biggr) \biggr)
=
\Sigma_{1/2} \biggl(T,  \frac z{\sqrt T}\biggr)
\longrightarrow
\Sigma \bigl(0,  \overg_{1/2}(z) \bigr)
=: \Sigma_{1/2} (z)
\quad
\mbox{as }T \to 0,
\ee
uniformly over $z$ in compact sets.
Concerning $\sigma_{1/2}$, it is immediate to see that \eqref{e:convergence_Sigma} implies that the limit 
\be \label{e:limit_sigma_one_half_rescaled}
\sigma_{1/2} \biggl(T,  \frac z{\sqrt T}\biggr)
= 
\frac {\sqrt T} z \int_0^{z / \sqrt T} \Sigma_{1/2}(T, y) \dd y 
= 
\frac 1 z \int_0^{z} \Sigma_{1/2}\biggl(T,  \frac x{\sqrt T}\biggr) \dd x
\longrightarrow
\frac 1 z \int_0^{z} \Sigma_{1/2}(x) \dd x
\ee
holds as $T \to 0$ uniformly over $z$ in compact sets, too, which proves the first part of \eqref{e:limits_normalized_implied_local_vol} with $\sigma_{1/2}(z) = \frac 1 z \int_0^{z} \Sigma_{1/2}(x) \dd x$.

Let us move to the normalized local volatility.
It follows from \eqref{e:rescaled_g_limit} and Assumption \ref{a:short_time_limit} (iii) that 
$\sigma_{\mathrm{Dup}, 1/2} \Bigl(T,  \frac z{\sqrt T} \Bigr) = 
\sigDup \Bigl(T,  g_{1/2} \Bigl( T,  \frac z{\sqrt T}\Bigr) \Bigr) \to \sigDup \bigl(0,  \overg_{1/2}(z) \bigr)$ as $T \to 0$, hence the second limit in \eqref{e:limits_normalized_implied_local_vol} follows with $\sigma_{\mathrm{Dup}, 1/2} (z) := \sigDup \bigl(0,  \overg_{1/2}(z) \bigr)$.
Using the additional assumption on the boundedness of the functions $\partial_T \sigBS$ and $\partial_{kk} \sigBS$ as $T \to 0$, we can see that the functions $a$ and $b$ defined in Proposition \ref{p:link_Dupire} remain bounded as $T \to 0$, which implies that $1 + T a\bigl(T,  g_{1/2} \bigl( T,  \frac z{\sqrt T}\bigr) \bigr) + T^2 b\bigl(T,  g_{1/2} \bigl( T,  \frac z{\sqrt T}\bigr) \bigr) \to 1$ as $T \to 0$.
Therefore, it follows from equation \eqref{e:Sigma_T} that
\[
\begin{aligned}
\Sigma \biggl(T,  g_{1/2} \biggl(T,  \frac z{\sqrt T}\biggr) \biggr)
\sim 
\sigDup \biggl(T,  g_{1/2} \biggl( T,  \frac z{\sqrt T}\biggr) \biggr)
\longrightarrow 
\sigma_{\mathrm{Dup}, 1/2} (z)
\quad \mbox{as } T \to 0.
\end{aligned}
\]
Comparing with \eqref{e:convergence_Sigma},  we identify $\Sigma_{1/2} (z)$ with $\sigma_{\mathrm{Dup}, 1/2} (z)$, hence formula \eqref{e:short_time_arith_mean_formula} follows from \eqref{e:limit_sigma_one_half_rescaled}.
\end{proof}

\section{A parametric framework: SSVI} \label{s:SSVI}

The SSVI parameterisation for total implied variance $w = v^2$ proposed by Gatheral and Jacquier \cite{GathJacqSVI} reads $w_{\mathrm{SSVI}}(k) = \frac{\theta}2 \left(1 + \rho \varphi  k + \sqrt{(\varphi k + \rho)^2 + 1 -\rho^2} \right)$.
The corresponding total implied volatility is
\begin{equation} \label{e:SSVI}
v_{\mathrm{SSVI}}(k) = \sqrt{ \frac{\theta}2 \left(1 + \rho \varphi  k + \sqrt{\Delta(k)} \right)},
\qquad
\Delta(k) = (\varphi k + \rho)^2 + 1 -\rho^2.
\end{equation}
Equation \eqref{e:SSVI} parametrizes a slice of the implied volatility surface at fixed maturity: the SSVI parameters are $\theta \maj 0$, $\varphi \maj 0$ and $\rho \in (-1,1)$.
Since $\min_{\{k \in \R\}} w_{\mathrm{SSVI}}(k) =  \frac \theta 2 \bigl( 1 - 2 \rho^2 + \sqrt{3 \rho^2 +1} \bigr) \maj 0$ for every $\rho \in (-1,1)$, $v_{\mathrm{SSVI}}$ satisfies Assumption \ref{a:assumption_v} for any triple $(\theta, \varphi, \rho)$ as above.
Theorem 4.2 in \cite{GathJacqSVI} proves that $v_{\mathrm{SSVI}}$ is a total implied volatility free of arbitrage (for the given maturity) if the  following sufficient conditions are satisfied:
\be \label{e:noArbSSVIslice}
\theta \varphi (1+|\rho|) < 4;
\qquad 
\theta \varphi^2 (1+|\rho|) \le 4.
\ee
The condition $\theta \varphi (1+|\rho|) \le 4$ (including the equality) is known to be necessary, see \cite[Lemma 4.2]{GathJacqSVI}, while the second inequality 	in \eqref{e:noArbSSVIslice} is not necessary in general.

We can evaluate the function $h$ for SSVI from
\eqref{e:derivative_H}:
\[
h_{\mathrm{SSVI}}(k)
= \left(\frac 1{v(k)}- \frac k {2 \, v^3(k)} w' \right)^{-1}
= \left(\frac 1{v(k)}- \frac {k \theta} {4 v^3(k)}
\biggl( \rho \varphi + \frac{\varphi k + \rho}{\sqrt{\Delta(k)}} \biggr)
\right)^{-1}
\]
where $v = v_\mathrm{SSVI}$ and $w=w_\mathrm{SSVI}$.
Moreover it turns out that, for the SSVI parameterisation, it is possible to explicitly compute the inverse transformation $g_{1/2}$ and the normalized implied volatility $v_{1/2}$.

\begin{proposition} \label{e:SSVI_one_half_vol}
Assume that the triple $(\theta, \varphi, \rho)$ satisfies \eqref{e:noArbSSVIslice}.
Then, the inverse transformation $g_{1/2} = f_{1/2}^{-1}$ for the SSVI parameterisation \eqref{e:SSVI} is given by
\[
g_{1/2}(z) =
\frac 12 \bigl(
\theta\rho \varphi z^2
+  z \sqrt{\theta^2 \varphi^2 z^2 + 4 \theta}
\bigr),
\qquad z \in \R;
\]
and the $1/2$--normalized SSVI implied volatility is
\be \label{e:v_one_half_SSVI}
v_{1/2}(z) =
\frac 12 \bigl(
\theta \rho  \varphi z
+ \sqrt{\theta^2 \varphi^2 z^2 + 4 \theta}
\bigr),
\qquad z \in \R.
\ee
\end{proposition}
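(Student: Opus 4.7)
The plan is to leverage the defining identity $z = f_{1/2}(k) = k/v(k)$ of the normalizing transformation to obtain an implicit algebraic equation for $v_{1/2}(z)$, and then solve it explicitly via the quadratic formula. First I observe that evaluating $z = k/v(k)$ at $k = g_{1/2}(z)$ and recalling $v_{1/2}(z) = v(g_{1/2}(z))$ immediately yields the fundamental identity
\[
g_{1/2}(z) = z \, v_{1/2}(z),
\]
which reduces the problem to computing $u := v_{1/2}(z)$. Substituting $k = z u$ into the SSVI expression \eqref{e:SSVI} for $v(k)^2$ then produces the implicit equation
\[
u^2 = \tfrac{\theta}{2}\Bigl(1 + \rho\varphi z u + \sqrt{(\varphi z u + \rho)^2 + 1 -\rho^2}\,\Bigr).
\]

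Next I isolate the square root, $\tfrac{2u^2}{\theta} - 1 - \rho\varphi z u = \sqrt{\Delta(zu)}$, and square both sides. Since we are working at a value $u$ for which the left-hand side automatically equals the non-negative $\sqrt{\Delta(zu)}$, the squaring is a genuine equivalence and no extraneous roots appear. Expanding the square on the left and the $(\varphi z u + \rho)^2 + 1 - \rho^2$ on the right, the cross-term $2\rho\varphi z u$ cancels, the two constants $1$ cancel, and the $\rho^2$-terms regroup into the combination $\rho^2 + (1-\rho^2)$ that will collapse the discriminant below. Dividing through by $u^2 > 0$ and multiplying by $\theta^2/4$, one is left with the quadratic
\[
u^2 - \theta\rho\varphi z\, u - \theta - \tfrac{(1-\rho^2)\theta^2\varphi^2 z^2}{4} = 0,
\]
whose discriminant equals $\theta^2\rho^2\varphi^2 z^2 + 4\theta + (1-\rho^2)\theta^2\varphi^2 z^2 = \theta^2\varphi^2 z^2 + 4\theta$; its two roots are therefore $u_{\pm} = \tfrac12\bigl(\theta\rho\varphi z \pm \sqrt{\theta^2\varphi^2 z^2 + 4\theta}\bigr)$.

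Finally, $|\rho|<1$ forces $|\theta\rho\varphi z| < \sqrt{\theta^2\varphi^2 z^2 + 4\theta}$, so that $u_+$ is strictly positive and $u_-$ strictly negative for every $z \in \R$. The positivity $v_{1/2}(z) > 0$ inherited from Assumption~\ref{a:assumption_v} (verified for SSVI under \eqref{e:noArbSSVIslice}, as already noted in the text) selects $v_{1/2} = u_+$, which is precisely formula \eqref{e:v_one_half_SSVI}; the expression for $g_{1/2}(z) = z\,v_{1/2}(z)$ follows immediately. The proof presents no conceptual obstacle: it is essentially a careful algebraic calculation. The only delicate bookkeeping is tracking the cancellations that compress the discriminant into the clean form $\theta^2\varphi^2 z^2 + 4\theta$, and invoking the non-negativity of $\sqrt{\Delta}$ to justify the squaring.
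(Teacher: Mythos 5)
Your proof is correct and follows essentially the same route as the paper: substitute the SSVI formula into the defining relation $f_{1/2}(k)=z$, isolate the SSVI square root $\sqrt{\Delta}$, square once, obtain a quartic with vanishing constant and linear coefficients, divide by the square of the unknown, and solve the resulting quadratic. The one genuine variation is that you solve for $u = v_{1/2}(z)$ directly rather than for $k = g_{1/2}(z)$ as the paper does; this is a small but pleasant streamlining, since the selection criterion $v_{1/2}(z) > 0$ picks out $u_+$ uniformly in $z$, whereas the paper must match $g_{1/2}(z)$ with $k_+$ or $k_-$ according to the sign of $z$. One small caveat: the claim that ``the squaring is a genuine equivalence and no extraneous roots appear'' is not quite right — the passage from $u = v(zu)$ to $u^2 = w_{\mathrm{SSVI}}(zu)$ already enlarges the solution set, and indeed the spurious root $u_-$ equals $-v_{1/2}(-z)$ — but since your final step correctly invokes positivity of $v_{1/2}$ rather than that equivalence claim, the argument goes through unharmed.
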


\begin{remark}
The normalized SSVI volatility $v_{1/2}$ is asymptotically linear for large arguments $z$, with $\lim_{z \pm \infty} \frac{v_{1/2}(z)}{|z|} = \theta \varphi (1 \pm \rho)$, as opposed to the SSVI implied volatility \eqref{e:SSVI}, which is
proportional to $\sqrt{|k|}$ for  large values of $|k|$.   
\end{remark}

\begin{proof}[Proof of Proposition \ref{e:SSVI_one_half_vol}]
Denoting $v(k) = v_\mathrm{SSVI}$ for simplicity, we have
\be \label{e:starting_point_g}
f_{1/2}(k)
= z 
\Longleftrightarrow 
k = v(k) z. 
\ee
By the invertibility of $f_{1/2}$ and Lemma \ref{e:image_f_p}, we know that the equation on the right hand side of \eqref{e:starting_point_g} has a unique solution $k$ for every $z \in \R$, which coincides with $g_{1/2}(z)$.
We already know that $g_{1/2}(z) = 0$ for $z=0$, hence we assume $z \neq 0$ in what follows, which implies $k \neq 0$.

By squaring both sides in \eqref{e:starting_point_g}, we obtain that, for every $z$, the equation $k^2 = v(k)^2 z^2$ has exactly two solutions $k_\pm$ (given by $g_{1/2}(z)$ and $g_{1/2}(-z)$).
Using the SSVI formula \eqref{e:SSVI}, the equation $k^2 = v(k)^2 z^2$ is equivalent to $\sqrt{\Delta(k)} = \frac{2 k^2}{\theta z^2} - 1 - \rho \varphi k$.
If we square both sides again, passing to the quartic equation $\Delta(k) = \Bigl(\frac{2 k^2}{\theta z^2} - 1 - \rho \varphi k\Bigr)^2$, we might add spurious solutions:
it turns out that this quartic equation has only two roots, which we can therefore identify with $k_\pm$.
Let us work this out: expanding the squares in $\Delta(k) = \Bigl(\frac{2 k^2}{\theta z^2} - 1 - \rho \varphi k\Bigr)^2$, after some cancellations and rearrangements we obtain
\[
\frac{4}{\theta^2 z^4} k^4
- \frac{4 \varphi \rho}{\theta z^2} k^3
+ \biggl(\rho \varphi^2 - \varphi^2 - \frac{4}{\theta z^2} \biggr) k^2 = 0.
\]
The special feature of the equation above it that it has no constant term and no linear term in $k$. Dividing by $k^2$, we are left with the quadratic equation $\frac{4}{\theta^2 z^4} k^2 - \frac{4 \varphi \rho}{\theta z^2} k + c = 0$,
where $c = \rho \varphi^2 - \varphi^2 - \frac{4}{\theta z^2}$.
The two roots 
\[
k_{\pm}
= \frac{\theta z^2}{2}
\biggl(
\rho \varphi  \pm 
\sqrt{ \frac 4 {\theta z^2} + \varphi^2} 
\biggr)
\]
are therefore the two solutions of $k^2 = v(k)^2 z^2$.
Since $\sqrt{\frac 4 {\theta z^2} + \varphi^2} \maj \varphi \ge |\rho \varphi|$, we have $k_+  \maj 0$ and $k_- \mino 0$ for every $z$, which allows us to identify $g_{1/2}(z)$ with $k_+$ for $z \maj 0$, resp.\til $k_-$ for $z \mino 0$.
Overall, we have obtained
\[
g_{1/2}(z)
= \frac{\theta z^2}{2}
\biggl(
\rho \varphi  + \mathrm{sign}(z)
\sqrt{ \frac 4 {\theta z^2} + \varphi^2} 
\biggr)
= 
\frac 12 
\biggl(
\theta \rho \varphi z^2 
+ z \sqrt{\theta^2 \varphi^2 z^2 + 4 \theta} 
\biggr),
\quad \forall \, z \in \R.
\]
Finally, since \eqref{e:starting_point_g} implies $g_{1/2}(z) = v_{1/2}(z) z$, we have $v_{1/2}(z) = \frac{g_{1/2}(z) }z$ for every $z \neq 0$, from which \eqref{e:v_one_half_SSVI} follows.
\end{proof}

\subsection{A pricing formula for $\mathbb{E}\bigl[\sqrt{X} \, \bigr]$}
\label{s:pricing_formula}

When $v$ is arbitrage-free and twice differentiable (as in the case of the SSVI parameterisation \eqref{e:SSVI}), the law of the random variable $X$ in \eqref{e:def_total_impl_vol} has a absolutely continuous part with density $f_X(K) = \frac{\dd^2 C(K)}{\dd K^2} 1_{K \maj 0}$ (the function $C$ being defined in \eqref{e:call_price}),
plus a possible atom at zero with mass $\Prob(X=0) = 1+\lim_{K \to 0} \partial_K C(K)$.
According to Remark \ref{rem:mass_zero}, we have $\Prob(X=0) = 0$ if and only if $\lim_{k \to -\infty} f_0(k) = -\infty$: a sufficient condition for this to hold is $\beta_- := \limsup_{k \to -\infty} \frac{v(k)^2}{|k|} \mino 2$, see e.g.\til\cite{Roper}.
Under such condition, any integrable claim $h(X)$ can be priced as
\be \label{e:pricing_h}
\esp[h(X)] = \int_0^\infty	h(K)  
\frac{\dd^2 C(K)}{\dd K^2} \dd K
= \int_0^\infty	h(K) 
\frac{\dd^2}{\dd K^2} c_{\mathrm{BS}}(k,v(k))|_{k = \log \frac K {F}} \dd K \,.
\ee
For some payoff functions $h$, it is possible to convert equation \eqref{e:pricing_h} into a particularly compact and elegant formula written in terms of one of the normalized implied volatilities $v_p$, see \cite[Theorem 4.6]{Fukasawa2012} and \cite[Theorem 2.7]{DM_CM_MGFs}.\footnote{The derivation of such a formula goes through the following steps: express $\frac{\dd^2}{\dd K^2} c_{\mathrm{BS}}(k,v(k))|_{k = \log \frac K {F}}$ in terms of Black-Scholes greeks and the derivatives of $v$ up to order two,  integrate by parts (carefully checking that the boundary terms vanish), and finally apply the log-strike transformation $z = f_p(k)$.}
Celebrated examples are Chriss and Morokoff's formula \cite{ChrissMorok99} for the log-contract,
$\esp[-2 \log(X)] = \int_{\R} v_0(z)^2 \phi(z) \dd z$ (recall that $\phi$ denotes the standard normal density), and Bergomi's formula \cite[Section 4.3.1]{Bergomi2016} for the moments of $X$ of order $p \in [0,1]$,
$\mathbb{E}\left[ X^p \right] = \int_{	\mathbb R} e^{\frac12 p(p-1) v_{p}(z)^2} \phi(z) \dd z$.
In particular, we have
\be \label{e:integral_formula_X_one_half}
\mathbb{E}\left[ \sqrt X \right]
= \int_{	\mathbb R} e^{- \frac 18 v_{1/2}(z)^2} \phi(z) \dd z.
\ee
Formula \eqref{e:integral_formula_X_one_half} is appealing because it only requires to know the values of $v_{1/2}$ (on, say, a set of quadrature points) and does not require to access the values of the derivatives of $v_{1/2}$.
When the number of liquid log-moneyness $k_i$ and implied volatilities $v^{\mathrm{mkt}}(k_i)$ observed on the market is sufficiently large, one can discretize the right hand side of \eqref{e:integral_formula_X_one_half} on the points $z_i = f_{1/2}(k_i) = \frac{k_i}{v^{\mathrm{mkt}}(k_i)}$, thus obtaining a model-free pricing formula for the claim $\sqrt X$.
On the other hand, when the starting point of pricing operations is to fit market option data with a volatility parameterization (such as SSVI), possibly because market data is relatively scarce,
it is of course interesting to be provided with explicit pricing formulas, instead of having to rely on  numerical integration of  \eqref{e:integral_formula_X_one_half}.
When $v$ is given by $v_{\mathrm{SSVI}}$, this is the program we carry out in the rest of this section.

When the underlying is the annualized realized variance of an observable asset $S$,
\[
X_T = \frac 1 T \sum_{0 \mino t_i \le T} \left(\log(S_{t_i}) - \log(S_{t_{i-1}})\right)^2,
\]
or its continuous counterpart $X_T = \frac 1 T \langle \log S \rangle_T$ when $S$ is a semi-martingale model, then $\esp [\sqrt X_T]$ represents the (undiscounted) price of realized volatility over the interval $[0,T]$, aka the fair strike of the volatility swap (or yet again, the vol swap forward volatility).
If SSVI is used to fit options on realized variance, Proposition \ref{e:square_root_X_explicit formula} below provides an explicit formula for the volatility swap.

First, note that $\beta_-(\mathrm{SSVI}) = \lim_{k \to -\infty} \frac{v_{\mathrm{SSVI}}(k)^2}{|k|} = \frac 12 \theta \varphi (1 - \rho)$, hence $\beta_-(\mathrm{SSVI}) \mino 2$ is granted by  the first condition in \eqref{e:noArbSSVIslice}.
Since
\[
-\frac 18 v_{1/2}(z)^2 - \frac 12 z^2
= 
- \Bigl( \frac 1{32} \theta^2 \varphi^2 (1+ \rho^2) + \frac 12 \Bigr) z^2
- \frac 1{8} \theta \rho \varphi z \sqrt{\frac{\theta^2 \varphi^2} 4 z^2 + \theta} 
- \frac \theta 8 \,,
\]
from \eqref{e:integral_formula_X_one_half} we have
\be \label{e:formula_sqrt_1}
\esp \left[\sqrt{X}  \right]
=
\frac{2 \, e^{-\frac \theta 8} } { \theta \varphi}
\int_{\R}
e^{
- A_{\theta} y^2
- \frac 14 \rho \, y  \sqrt{y^2 + \theta}
}
\frac{\dd y}{\sqrt{2 \pi}}
\qquad
\mbox{with } A_{\theta} = \frac 18 (1 +\rho^2) + \frac 2{\theta^2 \varphi^2} \,,
\ee
where we have applied the change of variable $y = \frac{\theta \varphi} 2 z$ to the right hand side of \eqref{e:integral_formula_X_one_half}.

Denote 
\[
I(\theta, \varphi, \rho) := \int_{\R}
e^{
- A_{\theta} y^2
- \frac 14 \rho \, y  \sqrt{y^2 + \theta}
}
\frac{\dd y}{\sqrt{2 \pi}}
\]
When $\rho = 0$, we simply have to evaluate a Gaussian integral $I(\theta, \varphi, 0) =  \int_{\R} e^{- A_{\theta} y^2} \frac{\dd y}{\sqrt{2 \pi}}
= \frac 1{\sqrt{2 A_\theta}}$.
When $\rho$ is different from zero, we did not manage to find an explicit expression for $I(\theta, \varphi, \rho)$ (neither did WolframAlpha online integrator\footnote{\url{www.wolframalpha.com}}),
but we can investigate the asymptotic behavior of the integral in limiting parameter regimes.
When SSVI \eqref{e:SSVI} is calibrated to market option data, typical values of $\varphi$ are of order $10^0$ or $10^1$,  see \cite{eSSVI}, while $\rho$ can approach $-1$ (for equity index smiles) or $1$ (as in the case of realized variance options, which usually display a positive implied volatility skew, see \cite{Drimus2012}, just as options on the VIX index).
Being an ATM implied variance, the parameter $\theta$ is usually small, often of order $10^{-2}$:
it seems reasonable, then, to look for an asymptotic approximation of $I(\theta, \varphi, \rho)$ as $\theta \to 0$.
A quick inspection reveals that the asymptotic behavior of the integral
in this regime is governed by the factor $e^{- A_{\theta} y^2}$:
since $A_\theta \to \infty$ as $\theta \to 0$, the function $e^{- A_{\theta} y^2}$ tends to zero exponentially fast for every $y \neq 0$ and, in the spirit of Laplace's method for integral approximation, the asymptotic behavior 
is influenced by the second factor $f_\theta(y)  = e^{- \frac 14 \rho \, y  \sqrt{y^2 + \theta}}$ only in a neighbourhood of $y = 0$.
We have to pay attention to the fact that the function $f_\theta(\cdot)$ also depends on $\theta$, but this does not add substantial difficulties to the analysis.

\begin{proposition}[Volatility swap value when $X_T$ models realized variance] \label{e:square_root_X_explicit formula}
For every $\rho \in (-1,1)$ and $\varphi \maj 0$, the no-arbitrage condition \eqref{e:noArbSSVIslice} is satisfied for $\theta$ small enough.
When the total implied volatility of the underlying $X_T$ is given by SSVI \eqref{e:SSVI}, the following asymptotic formula holds:
\be \label{e:asymptotics_expectation}
\esp \left[\sqrt{X_T}  \right]
=
 \frac{e^{-\frac \theta 8}}{\sqrt{ 1 + \frac{1+\rho^2}{16} \theta^2 \varphi^2}}(1 + o(1))
\qquad
\mbox{as } \theta \to 0.
\ee
Moreover, if $\rho = 0$, the above formula is exact: we have $\esp \bigl[\sqrt{X_T} \, \bigr] = \frac{e^{-\frac \theta 8}}{\sqrt{ 1 + \frac{\theta^2 \varphi^2}{16}}}$ for every couple $(\theta, \varphi)$ satisfying condition \eqref{e:noArbSSVIslice}.
\end{proposition}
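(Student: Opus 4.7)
The starting point is the integral representation \eqref{e:formula_sqrt_1}: the claim reduces to the asymptotic behaviour of $I(\theta,\varphi,\rho)$ as $\theta \to 0$, together with its exact evaluation when $\rho = 0$. The no-arbitrage preamble is immediate: for $\varphi \maj 0$ and $\rho \in (-1,1)$ fixed, both inequalities in \eqref{e:noArbSSVIslice} hold as soon as $\theta \le \theta_0$ for some $\theta_0 = \theta_0(\varphi,\rho) \maj 0$, so formula \eqref{e:formula_sqrt_1} is available throughout the regime $\theta \to 0$.

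When $\rho = 0$, the cross term in the exponent of $I$ vanishes and one is left with the Gaussian $I(\theta,\varphi,0) = 1/\sqrt{2A_\theta}$. The algebraic identity
\[
2A_\theta = \frac{1+\rho^2}{4} + \frac{4}{\theta^2\varphi^2} = \frac{4}{\theta^2\varphi^2}\Bigl(1 + \frac{(1+\rho^2)\theta^2\varphi^2}{16}\Bigr)
\]
yields $\sqrt{2A_\theta} = \frac{2}{\theta\varphi}\sqrt{1 + (1+\rho^2)\theta^2\varphi^2/16}$; plugging this back into \eqref{e:formula_sqrt_1} with $\rho = 0$ gives the claimed exact expression.

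For general $\rho$, I would run a Laplace-style argument tailored to the $\theta$-dependence. Rescaling $u = y\sqrt{2A_\theta}$ to isolate the standard Gaussian weight gives
\[
\sqrt{2A_\theta}\, I(\theta,\varphi,\rho) = \int_{\R} e^{-u^2/2}\, g_\theta(u)\, \frac{\dd u}{\sqrt{2\pi}},
\qquad
g_\theta(u) := \exp\Bigl(-\frac{\rho u}{4\sqrt{2A_\theta}}\sqrt{\frac{u^2}{2A_\theta} + \theta}\Bigr).
\]
Since $\sqrt{2A_\theta} \to \infty$ as $\theta \to 0$, a direct computation yields $g_\theta(u) \to 1$ pointwise in $u$. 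Using $\sqrt{a+b} \le \sqrt a + \sqrt b$ together with the lower bound $A_\theta \ge 2/(\theta^2\varphi^2)$, the absolute value of the exponent of $g_\theta$ is bounded by $\frac{|\rho|\theta^2\varphi^2}{16} u^2 + \frac{|\rho|\varphi\theta^{3/2}}{8}|u|$, so for $\theta$ small enough one has $g_\theta(u)\, e^{-u^2/2} \le e^{-u^2/4 + |u|}$ uniformly in $\theta$. Dominated convergence then gives $\sqrt{2A_\theta}\, I(\theta,\varphi,\rho) \to 1$, and inserting the identity for $\sqrt{2A_\theta}$ into \eqref{e:formula_sqrt_1} delivers \eqref{e:asymptotics_expectation}.

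The main obstacle I expect is keeping the dominating function integrable while the effective Gaussian width itself depends on $\theta$: the integrand depends on $\theta$ both through $1/\sqrt{2A_\theta}$ and through the genuine $\theta$-shift inside $\sqrt{y^2+\theta}$. The quantitative bound $1/(8A_\theta) \le \theta^2\varphi^2/16$, a direct consequence of $A_\theta \ge 2/(\theta^2\varphi^2)$, is precisely what prevents the cross term from absorbing a non-negligible fraction of the $u^2$-mass in the Gaussian; without this estimate, the $u^2$-coefficient in $\log g_\theta$ would not be suppressed as $\theta \to 0$ and the dominated-convergence step would break down.
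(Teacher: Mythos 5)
Your proof is correct. It proves the same key asymptotics $I(\theta,\varphi,\rho)\sim (2A_\theta)^{-1/2}$ that the paper isolates in its Lemma \ref{l:asympt_integral}, but by a genuinely different route: you rescale $u = y\sqrt{2A_\theta}$ to normalize the quadratic term into a fixed standard Gaussian weight, identify the residual factor $g_\theta(u)$, check $g_\theta(u)\to 1$ pointwise, and supply a $\theta$-uniform integrable dominating function ($e^{-u^2/4+|u|}$, valid once $|\rho|\theta^2\varphi^2/16 \le 1/4$ and $|\rho|\varphi\theta^{3/2}/8 \le 1$), so the conclusion drops out of dominated convergence. The paper instead fixes $\delta>0$, splits the integral into $|y|<\delta$ and $|y|\ge\delta$, shows the tail is $o(e^{-A_\theta\delta^2})$, sandwiches the inner piece between $(1\pm e^{c\delta^2}c\delta)$ times the full Gaussian integral, and then takes $\delta\to 0$ after $\theta\to 0$. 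Your one-shot rescaling-plus-domination is cleaner to write and avoids the double limit, whereas the paper's explicit $\delta$-decomposition is the more quantitative version of Laplace's method and would yield an explicit error rate with little extra work. The $\rho=0$ exact evaluation and the final substitution of $\sqrt{2A_\theta}=\frac{2}{\theta\varphi}\sqrt{1+(1+\rho^2)\theta^2\varphi^2/16}$ into \eqref{e:formula_sqrt_1} are carried out the same way in both, and your algebra checks out.
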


The proof of Proposition \ref{e:square_root_X_explicit formula} is postponed to the Appendix.
We test the asymptotic formula \eqref{e:asymptotics_expectation} numerically under two parameter configurations, one with positive and one with negative ATM skew; the results are shown in Figures \ref{fig:formula_SSVI_pos_slope} and  \ref{fig:formula_SSVI_neg_slope}.
Formula \eqref{e:asymptotics_expectation} appears to be very accurate even for values of $\theta$ corresponding to ATM implied volatilities of $100\%$ as in Figure \ref{fig:formula_SSVI_pos_slope}
(in the right pane, relative errors remain below $2\%$ for $\theta=1$).
Figure \ref{fig:formula_SSVI_neg_slope} displays a situation that is more typical of equity indices.
Overall, we deem that the accuracy of the asymptotic approximation in Proposition \ref{e:square_root_X_explicit formula} is more than satisfactory in a wide range of market conditions,
including realized variance options or possibly stressed equity markets.

\begin{figure}[t]
	\centering
		\includegraphics[width=0.52\textwidth]{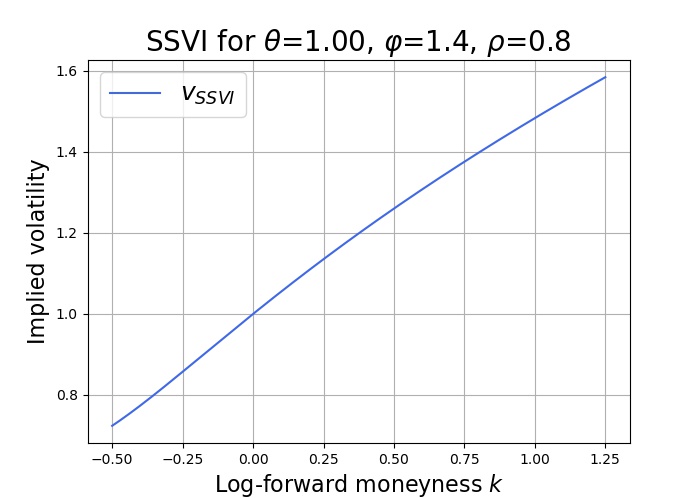}
		\hspace{-10mm}
		\includegraphics[width=0.52\textwidth]{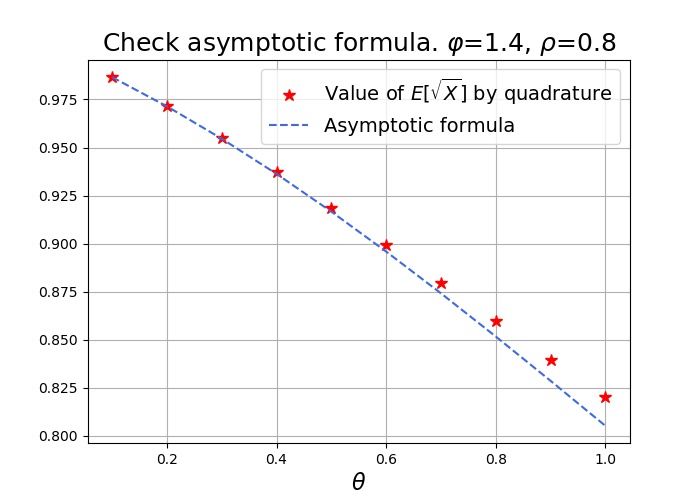}
	\caption{Left: an example of SSVI parameterisation with positive skew, typical of options on realized variance. Right: comparison of the asymptotic formula \eqref{e:asymptotics_expectation} as $\theta \to 0$ (dashed line) with the values of $\esp \bigl[\sqrt{X} \, \bigr]$ obtained by quadrature of the right hand side of \eqref{e:formula_sqrt_1}, for different values of $\theta$.}
	\label{fig:formula_SSVI_pos_slope}
\end{figure}

\begin{figure}[t]
	\centering
		\includegraphics[width=0.52\textwidth]{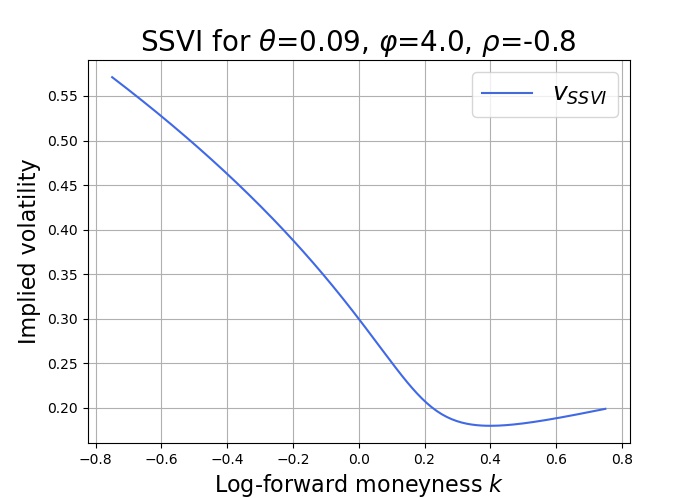}
		\hspace{-10mm}
		\includegraphics[width=0.52\textwidth]{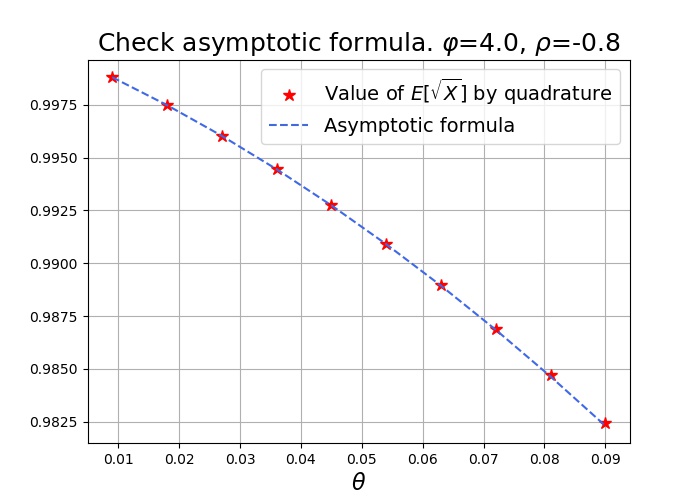}
	\caption{Same functions as in Figure \ref{fig:formula_SSVI_pos_slope} (left: SSVI implied volatility; right: values of $\esp \bigl[\sqrt{X} \bigr]$ for different values of $\theta$), now for a SSVI parameterisation with negative ATM skew.}
\label{fig:formula_SSVI_neg_slope}
\end{figure}

\appendix

\section{Appendix} \label{s:app}

\begin{proof}[Proof of \eqref{e:rescaled_g_limit}]
For every $T \maj 0$, define the time--rescaled transformations
\[
\begin{aligned}
&\overline f_{1/2}(T, k)
:= \sqrt T  f_{1/2}(T, k)
=  \frac k{\sigBS(T,k)}
\\
&\overline g_{1/2}(T, z)
:= \overline f_{1/2}(T, \cdot)^{-1}(z)
= g_{1/2}\Bigl(T,  \frac z{\sqrt T}\Bigr) \,.
\end{aligned}
\]
Since the functions $\overg_{1/2}(T,\cdot)$ are strictly increasing from $\R$ onto $\R$ for every $T$, it is sufficient to show that the convergence $\lim_{T \to 0} \overline g_{1/2}(T, z) = \overline g_{1/2}(z)$ holds pointwise. 
Thanks to Assumption \eqref{a:short_time_limit} (i), we know that $\overline f_{1/2}(T, k)$ tends to $\overline f_{1/2}(k) := \frac k{\sigBS(0,k)}$ uniformly over $k$ in compact sets.
We have to transfer the uniform convergence of $\overline f_{1/2}(T, \cdot)$ towards $\overline f_{1/2}(k) := \frac k{\sigBS(0,k)}$, granted by Assumption \eqref{a:short_time_limit} (i), to pointwise convergence of the inverse functions $\overg_{1/2}(T, \cdot)$.
This is a rather standard procedure: fix 
$z \in \R$, and denote for simplicity $k^z_T = \overg_{1/2}(T,  z)$.
First, it is not difficult to see that $k^z_T$ remains in a compact set as $T \to 0$.
Then, since
\[
\overline f_{1/2}(k^z_T) =
z
+ 
\bigl(
\overline f_{1/2}(k^z_T) - \overline f_{1/2}(T,k^z_T)
\bigr)
\longrightarrow z
\qquad \mbox{as } T \to 0
\]
due to the uniform convergence on compact sets of $\overline f_{1/2}(T,\cdot)$ to $\overline f_{1/2}(\cdot)$, 
using the continuity of $\overline f_{1/2}^{-1}(\cdot)$ we obtain $k^z_T
\to f_{1/2}^{-1}(z) = \overg_{1/2}(z)$ as $T \to 0$, which concludes the proof.
\end{proof}

\textbf{Proof of Proposition \ref{e:square_root_X_explicit formula}}. The proof of Proposition \ref{e:square_root_X_explicit formula}
is essentially based on the following Lemma.

\begin{lemma} \label{l:asympt_integral}
For every $\rho \in (-1,1)$ and $\varphi \maj 0$, the following asymptotics holds
\be \label{e:asympt_integral}
I(\theta, \varphi, \rho)
\sim \frac 1 {\sqrt{2 A_\theta}}
= \frac{\theta \varphi} { 2 \sqrt{ 1 + \frac{1+\rho^2}{16} \theta^2 \varphi^2} },
\qquad
\mbox{as } \theta \to 0.
\ee
\end{lemma}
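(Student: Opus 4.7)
The plan is to apply a rescaled Laplace-type argument to
\[
I(\theta, \varphi, \rho) = \int_{\R} e^{-A_\theta y^2 - \frac{\rho}{4} y \sqrt{y^2 + \theta}} \, \frac{\dd y}{\sqrt{2\pi}},
\]
taking advantage of the fact that $A_\theta$ diverges like $\frac{2}{\theta^2 \varphi^2}$ as $\theta \to 0$. Hence the Gaussian factor $e^{-A_\theta y^2}$ concentrates around $y = 0$ at scale $1/\sqrt{A_\theta}$, while the perturbation $e^{-\frac{\rho}{4} y \sqrt{y^2 + \theta}}$ should become negligible in that shrinking window.

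I would first perform the substitution $u = \sqrt{2 A_\theta}\, y$, which normalizes the Gaussian to the standard one and extracts a prefactor $1/\sqrt{2 A_\theta}$. The claim then reduces to
\[
\sqrt{2 A_\theta}\, I(\theta, \varphi, \rho)
= \int_\R \frac{e^{-u^2/2}}{\sqrt{2\pi}} \exp\!\left( -\frac{\rho\, u}{4 \sqrt{2 A_\theta}} \sqrt{\tfrac{u^2}{2 A_\theta} + \theta} \right) \dd u \;\longrightarrow\; 1.
\]
For each fixed $u$, both $1/\sqrt{2 A_\theta}$ and $\sqrt{\frac{u^2}{2 A_\theta} + \theta}$ tend to $0$ as $\theta \to 0$, so the inner exponential converges pointwise to $1$.

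The remaining step is to produce a $\theta$-uniform integrable dominator so as to invoke dominated convergence. Using $\sqrt{y^2 + \theta} \le |y| + \sqrt{\theta}$, and hence $|y|\sqrt{y^2 + \theta} \le y^2 + |y|\sqrt{\theta}$, the $u$-dependent exponential is bounded by
\[
\exp\!\left( \frac{|\rho|}{8 A_\theta} u^2 + \frac{|\rho| \sqrt{\theta}}{4 \sqrt{2 A_\theta}} |u| \right).
\]
Since $A_\theta \ge \frac{2}{\theta^2 \varphi^2}$, the coefficient of $u^2$ is $O(\theta^2)$ and the coefficient of $|u|$ is $O(\theta^{3/2})$. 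For $\theta$ small enough both can be made $\le 1/4$, so the full integrand is dominated by $\frac{1}{\sqrt{2\pi}} e^{-u^2/4 + |u|/4}$, uniformly in $\theta$. Dominated convergence then yields $\sqrt{2 A_\theta}\, I(\theta, \varphi, \rho) \to 1$, which is exactly \eqref{e:asympt_integral}.

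I do not expect a real obstacle here: the only delicate point is the uniform domination, and the quadratic blow-up of $A_\theta$ is strong enough to crush the corrections coming from the $\sqrt{y^2 + \theta}$ factor. The whole argument is a textbook Laplace rescaling combined with dominated convergence; it is also worth noting that the explicit evaluation in the $\rho = 0$ case (a plain Gaussian integral) is completely consistent with the limit computed this way.
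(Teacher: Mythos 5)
Your proof is correct, and it takes a genuinely different route from the paper's. The paper splits the integral over $\{|y| < \delta\}$ and $\{|y| \ge \delta\}$ for a small but fixed $\delta > 0$, bounds the tail contribution by $o(e^{-A_\theta \delta^2})$, pins down the perturbation factor $f_\theta$ to within $1 \pm e^{c\delta^2} c\delta$ on the bulk region, and then passes to the limit with a $\liminf$/$\limsup$ sandwich followed by letting $\delta \downarrow 0$. You instead rescale $u = \sqrt{2A_\theta}\, y$ to freeze the Gaussian, observe that the perturbation converges pointwise to $1$, and close the argument by exhibiting a $\theta$-uniform integrable dominator via $|y|\sqrt{y^2+\theta} \le y^2 + |y|\sqrt{\theta}$ and the bound $1/A_\theta \le \theta^2\varphi^2/2$. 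The dominated-convergence route is shorter and dispenses with the two-parameter ($\delta$, then $\theta$) limit dance; what the paper's decomposition buys instead is a cleanly quantified error estimate on the tail region (of order $e^{-\delta^2/(\varphi^2\theta^2)}$), which could be reused if one wanted a rate of convergence rather than just the leading-order equivalence. Both are complete proofs of the stated equivalence; the $\rho = 0$ exact evaluation is consistent with either.
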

\begin{proof}
It is not difficult to see that, for every $\delta \maj 0$, $\int_{|y| \ge \delta} e^{- A_{\theta} y^2} \dd y = o(e^{-A_{\theta} \, \delta^2}) = o\Bigl(e^{-\frac{\delta^2}{\varphi^2} \frac 1{\theta^2 }} \Bigr) = o(\theta)$ as $\theta \to 0$, so that
\be \label{e:first_integral_asympt}
\int_{|y| \mino \delta} e^{- A_{\theta} y^2} \frac{\dd y}{\sqrt{2 \pi}}
\sim
\int_{y \in \R} e^{- A_{\theta} y^2} \frac{\dd y}{\sqrt{2 \pi}}
= \frac 1 {\sqrt{2 A_\theta}}
\sim \frac{ \theta \varphi}2.
\ee
The proof of \eqref{e:asympt_integral} implements the same idea for the integral $I(\theta, \varphi, \rho)$.
Let $\delta \maj 0$ and denote $f_\theta(y)  = e^{- \frac 14 \rho \, y  \sqrt{y^2 + \theta}}$, so that 
\begin{multline} \label{e:decomposition}
\sqrt{2 \pi} \, I(\theta, \varphi, \rho) =
\int_{\R} e^{
- A_{\theta} y^2} f_\theta(y)  \dd y
\\
= 
\int_{|y| \mino \delta} e^{
- A_{\theta} y^2} f_\theta(y)  \dd y
+
\int_{{|y| \ge \delta}} e^{
- A_{\theta} y^2} f_\theta(y)  \dd y
:= I_1(\theta) + I_2(\theta) \,.
\end{multline}
Assume without loss of generality $\theta \mino \delta^2$. 
Then, $|y| \ge \delta$ implies $ \sqrt{y^2 + \theta} \le \sqrt 2 |y|$, hence $f_\theta(y) \le e^{\frac{|\rho|}4 \sqrt 2 \, y^2} =  e^{c \, y^2}$ with $c = \frac {|\rho|}4 \sqrt 2$.
Consequently, for $\theta$ small enough (precisely: such that $A_{\theta} - c \maj 0$), we have
\be \label{e:estim_I_1}
\begin{aligned}
I_2(\theta)
&\le 
\int_{{|y| \ge \delta}} e^{
- (A_{\theta} - c) y^2}  \dd y
=
e^{- (A_{\theta} - c) \delta^2} 
\int_{y \ge \delta} 
\frac {2y} y \, e^{
- (A_{\theta} - c) (y^2 - \delta^2)}  \dd y
\\
&\le 
e^{- (A_{\theta} - c) \delta^2} 
\frac 1{\delta (A_{\theta} - c)}
\left[ e^{- (A_{\theta} - c) (y^2 - \delta^2)} \right]_{y=\infty}^{y=\delta}
\\
&= 
\frac{ e^{c \, \delta^2} } {\delta (A_{\theta} - c)}
e^{- A_{\theta} \delta^2}
= o_{\theta \to 0} (e^{- A_{\theta}  \delta^2})
= o_{\theta \to 0} \Bigl(e^{-\frac{\delta^2}{\varphi^2} \frac 1{\theta^2 }} \Bigr).
\end{aligned}
\ee 

Let us now estimate $I_1(\theta)$. 
For $|y| \mino \delta$, we have $f_\theta(y) \le e^{\frac{|\rho|}4 \delta \sqrt{\delta^2 + \theta}} \le e^{\frac{|\rho|}4 \sqrt 2 \, \delta^2} = e^{c \, \delta^2}$,
hence
\[
\begin{aligned}
|f_\theta(y) - 1|
&\le
\left|\int_0^y f_\theta'(z) \dd z \right|
=
\left| \int_0^y f_\theta(z) \frac{\dd}{\dd z} \Bigl( - \frac 14 \rho \, z  \sqrt{z^2 + \theta} \Bigr) \dd z \right| 
\\
&\le
e^{c \, \delta^2} \frac{|\rho|} 4 \left| \int_{\min(y,0)}^{\max(y,0)}  \frac{\dd}{\dd z} ( z  \sqrt{z^2 + \theta} ) \dd z \right| 
\le
e^{c \, \delta^2} \frac{|\rho|} 4 \delta \sqrt{\delta^2 + \theta}
\le
e^{c \, \delta^2} c \, \delta^2
\end{aligned}
\]
It follows that  $1 - e^{c \, \delta^2} c \, \delta^2 \le f_\theta(y) \le 1 + e^{c \, \delta^2} c \, \delta^2$ for every $y$ with $|y|\mino \delta$, therefore
\[
\begin{aligned}
(1 - e^{c \, \delta^2} c \,\delta)
\int_{|y| \mino \delta}
e^{ - A_{\theta} y^2}  \dd y
\le I_1(\theta) \le
(1 + e^{c \, \delta^2} c \, \delta)
\int_{|y| \mino \delta} e^{
- A_{\theta} y^2} \dd y \,,
\end{aligned}
\]
hence, in light of \eqref{e:first_integral_asympt},
\[
\begin{aligned}
(1 - e^{c \, \delta^2} c \,\delta)
\le 
\liminf_{\theta \to 0}
\frac{I_1(\theta)}{\int_{y \in \R} e^{- A_{\theta} y^2} \dd y}
\le 
\limsup_{\theta \to 0}
\frac{I_1(\theta)}{\int_{y \in \R} e^{- A_{\theta} y^2} \dd y}
\le
(1 + e^{c \, \delta^2} c \, \delta).
\end{aligned}
\]
Since $\delta \maj 0$ is arbitrary, we conclude that $I_1(\theta) \sim \int_{y \in \R} e^{- A_{\theta} y^2} \dd y \sim \frac 1 {\sqrt{2 A_\theta}} \sim \frac{ \theta \varphi}2$ as $\theta \to 0$, therefore \eqref{e:asympt_integral} follows from \eqref{e:decomposition} and \eqref{e:estim_I_1}.
\end{proof}

\begin{proof}[Proof of Proposition \ref{e:square_root_X_explicit formula}]
The expansion \eqref{e:asymptotics_expectation} now follows from equation \eqref{e:formula_sqrt_1} and Lemma \ref{l:asympt_integral}.
\end{proof}

\end{document}